\pgfplotsset{compat=1.17}
\tikzstyle{none}=[inner sep=0mm]
\tikzset{new style 0/.style={circle,draw}}
\newtheorem{theorem}{Theorem}
\newtheorem{corollary}{Corollary}
\newtheorem{lemma}{Lemma}
\newtheorem{proposition}{Proposition}
\newtheorem{definition}{Definition}
\newtheorem{remark}{Remark}
\newenvironment{proof}{\paragraph{Proof:}}{\hfill$\square$}
\title{Strategic Bid Shading in Real-Time Bidding Auctions in Ad Exchange Using Minority Game Theory}
\author{Dipankar Das\\ Assistant Professor\\	Goa Institute of Management\\	Email ID:dipankar@gim.ac.in;dipankar3das@gmail.com
}
\date{}
\begin{document}
\maketitle
\begin{abstract}
Traditional auction theory posits that bid value exhibits a positive correlation with the probability of securing the auctioned object in ascending auctions. However, under uncertainty and incomplete information, as is characteristic in real-time advertising markets, truthful bidding may not always represent a dominant strategy or yield a Pure Strategy Nash Equilibrium. Real-Time Bidding (RTB) platforms operationalize impression-level auctions via programmatic interfaces, where advertisers compete in first-price auction settings and often resort to bid shading, i.e., strategically submitting bids below their private valuations to optimize payoff.\\	
This paper empirically investigates bid shading behaviors and strategic adaptation using large-scale RTB auction data from the Yahoo Webscope dataset. Integrating Minority Game Theory with clustering algorithms and variance-scaling diagnostics, we analyze equilibrium bidding behavior across temporally segmented impression markets. Our results reveal the emergence of minority-based bidding strategies, wherein agents partition hourly ad slots into submarkets and place bids strategically where they anticipate being in the numerical minority. This strategic heterogeneity facilitates reduced expenditure while enhancing win probability, functioning as an endogenous bid shading mechanism. The analysis highlights the computational and economic implications of minority strategies in shaping bidder dynamics and pricing outcomes in decentralized, high-frequency auction environments.
\end{abstract}
\paragraph{Keywords} Game-theoretic and strategic analytics,Digital Markets, Minority Game Theory, Real Time Bidding, Yahoo, Auction, Bid Shading
\pagebreak
\begin{center}
	\textbf{Strategic Bid Shading in Real-Time Bidding Auctions in Ad Exchange Using Minority Game Theory}
\end{center}
\section{Introduction}\label{sec:Intro}
Real-Time Bidding (RTB) represents a computational marketplace architecture that parallels electronic stock exchanges, enabling the algorithmic purchase and sale of digital advertising slots in real time. The process, from a user visiting a website to a bid request being sent and the display of the winning advertisement, typically occurs within 100 milliseconds. RTB platforms implement impression-level auctions, allocating advertising slots through automated bidding agents deployed by advertisers on Demand Side Platforms (DSPs). These agents respond dynamically to bid requests issued by Supply Side Platforms (SSPs) or Ad Exchanges (ADXs), factoring in budget constraints, user-level contextual features, and campaign-specific objectives.\\
Many companies/firms compete in the DSP to get the desired slots. On the other hand, by default there are many digital slots available in the form of mobile applications, websites, gaming, and other platforms, etc. The real viewers of these Ads are the real users of these applications, websites, and platforms. \\
\indent Within this ecosystem, user conversions are categorized into two primary types: post-click conversions, which occur following a user click-through and uninterrupted interaction on the advertiser’s website; and post-view conversions, where users convert without direct interaction but through latent impression influence. Advertisers register campaigns with DSPs to participate in RTB, specifying budget allocations and targeting parameters. As per the targets, the algorithm is designed to achieve the proposed targets throughout the period for which the bidder is willing to bid for digital slots. In addition to this, bidding happens dynamically. Bidding agents operational on the DSPs compete in impression-level auctions using first-price (or generalized first-price) mechanisms. The winning bidder’s advertisement is subsequently delivered, and DSPs track downstream user behavior (e.g., click-through rate, conversion rate) to refine future bidding strategies and optimize campaign-level Key Performance Indicators (KPIs), such as cost-per-click (CPC) and cost-per-acquisition (CPA) \cite{mahdian2008pay}.\\
A set of willing bidders fails to compete for the digital slots, and the current article does not consider them.\\
\indent RTB facilitates granular targeting using real-time user-level data, aligning advertising decisions with principles of personalized information retrieval and adaptive decision-making. Advertisers aim to acquire impressions that maximize expected utility concerning KPIs. At the same time, publishers allocate impressions to advertisers that yield the highest revenue, a dual-sided optimization problem grounded in game theory and dynamic pricing. The confluence of behavioral tracking, impression-level auctions, and budget-aware bidding transforms RTB into a computational analog of high-frequency financial markets, where algorithmic agents engage in strategic interactions under uncertainty.\\
But the optimization model is reference-dependent. This means the advertisers prefer a particular slot, and no such demand may exist for that slot. Hence, the winning slots may not generate the optimized revenue. This is the point that the paper considers when determining the advertisers' bidding strategy. The standard mechanism design under auction theory does not consider this phenomenon. \\
\indent Despite advances in auction modeling and bidder behavior prediction, understanding equilibrium bidding strategies in high-frequency, data-rich environments remains challenging. Traditional auction theory posits that the probability of winning increases monotonically with bid value in ascending auctions. However, truthful bidding may not be optimal under incomplete information and outcome uncertainty, which is common in RTB markets. Advertisers may engage in \emph{bid shading}, wherein they deliberately understate their valuations to reduce cost while maintaining competitive viability.\\
The DSP expects that the bidders are truthful about all slots, but this is not the case. We use this as the bid shading.\\ 
\indent This paper investigates bid shading behavior through the lens of Minority Game (MG) Theory, a computational model of strategic adaptation among boundedly rational agents. Using the Yahoo Webscope RTB dataset, we identify systematic patterns where bidding agents segment hourly impression markets and strategically act as minorities to evade direct price competition. This emergent behavior serves as an endogenous mechanism of bid shading, influenced by informational asymmetries and the decentralized structure of the auction environment.\\
\indent The remainder of this article is structured as follows: Section 2 reviews related literature and outlines the study’s objectives. Section 3 describes the Yahoo Webscope dataset used for empirical analysis. Section 4 presents the theoretical model, including a formalization of the Minority Game framework in the context of RTB auctions. Section 5 formalizes the findings regarding propositions, theorems, lemmas, and corollaries to interpret the issue in a general model framework. Section 6 reports the empirical findings and validates the theoretical insights through data-driven simulations. Finally, Section 6 concludes the study with implications for future research in computational auction design and algorithmic market behavior.
\section{Literature Review and Objective of the Study}
The present article is based on two types of literature: auction and minority game theory. We have merged these two to get some meaningful insight. Therefore, in the first part, we have presented a set of literature related to real-time bidding and bidding behavior. In the second part, we have presented the literature on minority game theory. 
\paragraph{Auction Mechanism and Bid Behavior} In sponsored search, several advertising slots are available on a search results page. They must be allocated among advertisers competing to advertise on the page. This situation gives rise to a bipartite matching market that is typically cleared by means of an automated auction mechanism \cite{aggarwal2009general}. The model is sufficiently flexible to accommodate bidder and position-specific minimum and maximum prices and different values for different slots. This suggests that in real-time bidding (RTB), the bidders have preferences regarding ad position and are heterogeneous in nature. The bid also depends on the probability that the selected ad-impression will be converted into a click. The standard matching algorithm stipulates that each matched pair of advertiser and slot has a positive price, while each unmatched slot has a zero cost. Therefore, selecting a low-priced slot with the possibility of higher ad impressions is the target of the bidder as a strategy. 
A widely adopted assumption in optimal auction design research is that the private values of advertisers follow log normal distribution\cite{edelman2006optimal,myerson1981optimal,ostrovsky2023reserve,xiao2009optimal}. Bids are then generated from the private values. However, a research dataset found that the 1st highest bid did not show strong log-normal distribution properties, nor did the 1st and 2nd highest bids, or all bids \cite{yuan2013real}. Moreover, it is found that bid amounts influence whether an impression is won and displayed, with higher bids increasing the likelihood of securing impressions, particularly during periods of high demand or when floor prices are involved.
The daily pacing refers to the way that advertisers spend
their budget in a single day. Analysis of datasets acquired from a production ad exchange revealed that floor price detection, daily pacing, and frequency setting are problems that are not addressed. Under certain dependency assumptions, a simple bidding function has been derived that can be calculated in real time; findings show that the optimal bid has a non-linear relationship with the impression level evaluation, such as the click-through rate and the conversion rate, which are estimated in real time from the
impression level features \cite{zhang2014optimal}. A  mathematical model suggests that optimal bidding strategies should try to bid for more impressions rather than focus on a small set of high-value impressions. An article by \cite{zhang2014real} suggests that data mining work, particularly bidding strategy development, becomes crucial in this performance-driven business. Hence, thus far, it is clear that the advertiser is at risk of not getting a slot or clicks from a specific ad slot. To minimize this risk, alternative options have been proposed in \cite{chen2015lattice,chen2015multi}. An ad option that gives an advertiser the right but not the obligation to purchase the future impressions or clicks from a specific ad slot (or user tag or keyword) at a pre-specified price. In \cite{cai2017real, du2019infer} it is established that bid decision is not a static optimization problem of either treating the value of each impression independently or setting a bid price to each segment of ad volume.  It is exciting to devise an optimal bidding strategy sequentially so that the entire ad campaign budget can be dynamically allocated across all the available impressions based on immediate and future rewards. A paper explains the relationship between bids and ad impressions through "cycling" in first-price auctions, where advertisers continuously adjust their bids to optimize placement and costs. This strategic behavior leads to dynamic patterns in bid adjustments, directly influencing which ads secure impressions and the corresponding auction revenues \cite{edelman2007strategic}. To address the complexity and dynamic nature of the RTB process, an auto pricing strategy (APS) approach has been proposed to determine the applications to bid for and their respective bid prices from the advertising agencies' perspective. \cite{adikari2019new}. An interesting finding is that RTB auctions are known to have relatively few bids per impression. Second, since RTB allows for cherry-picking of impressions in real time, it negatively aﬀects the quality of impressions assigned to reservation contracts. As such, advertisers' willingness to pay for impressions in reservation contracts
declines, lowering publishers' revenue from reservation contracts. This is because the rapid growth of RTB has created new challenges for advertisers and publishers on how much budget and ad inventory to allocate to RTB \cite{sayedi2018real}. A new advertiser always bids higher (sometimes above
valuation) in the beginning. The incumbent advertiser’s strategy depends on its valuation and CTR. A strong incumbent increases its bid to deter the publisher from learning the new advertiser’s CTR, whereas a weak incumbent decreases its bid to facilitate learning \cite{choi2019learning}. An article \cite{kinnear2020optimal} finds that an important characteristic is that bidding does not involve item valuations. The demand side platform seeks to fulfill acquisition contracts, not maximize the value of items won. The bidder or the advertiser sets utility maximization with a budget constraint as the primary goal in real-time bidding (RTB) systems \cite{ghosh2020optimal}. Therefore, if the utility function depends on the ad position, the ad's time, and the viewers' audiences, then this optimization problem will not allow them to choose a bid point that maximizes the ad-impression. This is so because ad impressions are heterogeneous regarding their positions to display. The paper \cite{kinnear2021real} explains the relationship between bid and ad impressions using the concept of supply rate curves, which quantify the expected number of items (or impressions) won at a given bid level. It highlights that higher bids generally increase the likelihood of winning impressions. Still, the cost and probability of winning vary based on the auction type (first-price or second-price) and the bid competition landscape. A paper discusses the relationship between bid and ad impressions by explaining that the bid amount directly impacts the likelihood of winning impressions in an auction. Specifically, higher bids are more likely to secure impressions. Still, this relationship is influenced by factors such as budget constraints, the dynamics of the auction environment, and the design of the bidding strategy, which optimizes clicks or impressions within a given budget \cite{liu2022real}. Another article explains the relationship between bid and ad impressions by modeling the optimal bid price as a function of the predicted key performance indicator (KPI, such as CTR) and the winning probability. It highlights that higher bid prices increase winning probabilities for impressions, but this must be balanced against budget constraints to maximize total impression value for an advertising campaign \cite{lu2022functional}.
\paragraph{Minority Game Theory} Consider a body of literature on the minority game and its applications in exchange markets. The Minority Game (MG) \cite{challet1997emergence} represents a mathematical interpretation of Brian Arthur's " El Farol Bar" problem \cite{arthur1994inductive}. This concept examines how individuals can collectively resolve an issue by adjusting their predictions about future outcomes. The primary focus of this problem is to investigate the process by which multiple individuals can reach a shared solution by modifying their personal expectations regarding upcoming events. This represents an extension of multi-agent systems. The dynamic nature of the problem raises new questions, such as whether individuals can reach a steady state and under which circumstances this steady state is stable \cite{moro2004minority}. The problem of the " El Farol Bar " investigates how one models "Bounded Rationality" in economic issues like exchange-traded markets. The problem has been modeled using a non-cooperative coordination game in which negative externalities determine payoffs. This is a repeated market-entry game with bounded rational agents \cite{whitehead2008farol}. Therefore, these attributes match the present issues with the RTB market. The mathematical structure of the minority game has been examined in \cite{challet2004mathematical}. Applications of the minority game in financial markets, particularly in stock exchange markets, have been explored in \cite{challet2004minority}. The minority game can generate synthetic price data to enhance market understanding. The 'synthetic' price history employed in these simulations, using data from real financial time series, leads to the notable finding that agents can collectively learn to identify moments in the market where profit is attainable. Consequently, when applied to real financial data, the system as a whole can perform better than random \cite{jeffries2003market}. Standard economic theory assumes a few agents compete in an uncooperative game, such as an auction. However, the RTB market resembles a stock exchange where many agents interact. Thus, the standard assumption of rationality is not realistically applicable. Therefore, a theory supporting evolutionary concepts, where agents make decisions, commit errors, learn, and adjust accordingly, was necessary. Physicists tend to view a game with a large number of players as a statistical system, necessitating the exploration of new approaches in which emerging collective phenomena can be better appreciated, and the minority game is one of the most widely accepted methods \cite{CHALLET1997407}. The market continually seeks new adaptive strategies for survival. The minority game is one method to identify such strategies. This is because the stock exchange is subject to mean reversion and represents a zero-sum game. As highly successful investors' strategies are revealed, many traders tend to adopt them. When most market participants begin using a strategy, its profitability gradually diminishes. Only early adopters benefit from it. The minority game can be employed to identify the minority strategy regarding time \cite{ZAPART20091157}. An important concept in relation to this is the evolutionary minority game theory, and a set of articles on this topic can be found in \cite{BURGOS2005518, weibull1997evolutionary}.
\paragraph{Research Gaps and Objective of the Study}
\indent The primary research gaps are as follows: (i) the extant literature has not addressed the bidding behavior pattern under the first-price auction mechanism, (ii) bid shading, i.e., given a private value for impressions, how should a bidder adjust their bid in a first-price auction? Bid shading in a first-price auction involves submitting a bid lower than the bidder's private value for an impression while maintaining a high probability of winning. The objective of the bidder is to maximize profits by reducing payment while still securing impressions. Utilizing existing methodology to calculate Bid Shading based on the dataset provided by Yahoo is not feasible, as the private value is unknown. Consequently, this study proposes a novel bidding strategy employing the concept of minority game theory to address this issue.
\section{Dataset Description}
\label{sec:data}
This study utilizes a proprietary dataset released by Yahoo as part of the Yahoo Research Webscope program \cite{yahoo_auction_2020}.\footnote{Brendan Kitts (2020), Auction State for a Sample of Real-Time Bid Video Ads Version 1.0, Yahoo Research Webscope dataset ydata-o\&o-video-auction-landscapes-2018-v1\_0, https://webscope.sandbox.yahoo.com/, June 20, 2020.} The dataset comprises a curated sample of video advertisements served via Verizon’s Supply-Side Platform (SSP) during 2018, and the associated auction-level data from Real-Time Bidding (RTB) markets.\\
\indent Each auction record is timestamped at the granularity of date and hour and is accompanied by a vector of discretized bid price intervals ranging from \$0 to \$50 in increments of \$0.10. For each interval, the dataset provides the corresponding number of impressions that could have been acquired at that price point. The dataset thus enables the reconstruction of granular bid landscapes across time and serves as a fertile ground for modeling auction market dynamics.\\
In particular, the dataset facilitates research in:
\begin{itemize}
	\item \textbf{Auction Forecasting:} Predicting future auction landscapes based on historical clearing prices and bid distributions across 1–14 day horizons.
	\item \textbf{Forward-Looking Bidding Algorithms:} Designing adaptive bidding strategies that optimize impression acquisition under predicted future auction states.
	\item \textbf{Bid Shading Optimization:} Evaluating how much a bidder should shade their bid in a first-price auction context, conditioned on private valuations and win-probability metrics over time.
\end{itemize}
Given its temporal structure and bid granularity, this dataset provides an ideal test-bed for exploring strategic behavior under bounded rationality, focusing on minority-based bidding strategies in high-frequency auction environments.
\subsection{Research Hypotheses}
\label{sec:hypothesis}
The following hypotheses are formulated to investigate strategic bidding behavior in RTB environments, specifically under the theoretical framework of Minority Game dynamics:
\begin{enumerate}[label=H\arabic*]
	\item \textit{Advertisers (i.e., bidders) engaged in RTB auctions adopt a bidding strategy consistent with the Minority Game paradigm.}
	\item \textit{Minority Strategy-based bidding behavior exhibits temporal consistency, persisting across multiple time resolutions (hourly, daily, weekly, monthly, and annually).}
	\item \textit{Bidders participating in minority strategies exhibit segmentation based on bid magnitude, resulting in differentiated behavioral clusters.}
	\item \textit{The bid distribution among minority participants is right-skewed toward lower bid values, particularly in auction intervals with fewer available impressions, indicating strategic cost minimization.}
	\item \textit{Competitive bidding intensifies during low-impression periods, reinforcing the viability of minority-based strategies in sparse auction contexts.}
\end{enumerate}
These hypotheses are empirically tested using statistical inference techniques, clustering analysis, Minority Game modeling, and variance-scaling diagnostics. Specifically, hypotheses \textbf{H1} to \textbf{H3} are validated through simulation and empirical analysis in Sections~\ref{sec:theory} and~\ref{sec:results}. Hypothesis \textbf{H4} is evaluated in Subsection~5.2, while Hypothesis \textbf{H5} is assessed in Subsection~5.3.
\section{Theoretical Framework}
\label{sec:theory}
This section develops the theoretical foundation for analyzing strategic bidding behavior in Real-Time Bidding (RTB) environments, focusing on the emergence of minority strategies in impression-level auctions.\\
\indent An initial exploratory analysis of the relationship between bid values and hourly impressions (from data, it is represented as a variable \texttt{imps\_day}) reveals a non-linear, bid-sensitive functional dependency. While marginal increases in bid (in increments of 0.10) occasionally result in disproportionately higher impressions, this relationship is neither stable nor deterministic. Accordingly, the identification of a bid path that maximizes impressions requires analyzing dynamic interactions across time scales and among heterogeneous agents.\\
\indent Figures~1 through~6 illustrate this relationship across six temporal resolutions: hourly, daily, weekly, monthly, yearly, and aggregated all-time, demonstrating that no universal bid-impression mapping exists. Bidders exhibit significant preference, valuation, and strategy heterogeneity, yielding divergent impressions at identical bid levels \cite{ghoshal2023serving}. Notably, higher bids do not consistently yield greater impression volumes. Instead, temporal aggregation (especially over yearly windows) reveals more structured, albeit bidder-specific, functional relationships. These observations suggest strategic temporal targeting, where advertisers defer aggressive bidding until optimal conditions emerge, leading to patterned bidding behavior. According to the paper \cite{grigasexpress}, the advertiser’s budget utilization proxy function is concave, signifying that the bidders are risk-averse. This supports the minority strategy hypothesis as proposed in the present article.
\begin{figure}[H]
	\centering
	\begin{minipage}[b]{0.32\textwidth}
		\includegraphics[width=\textwidth]{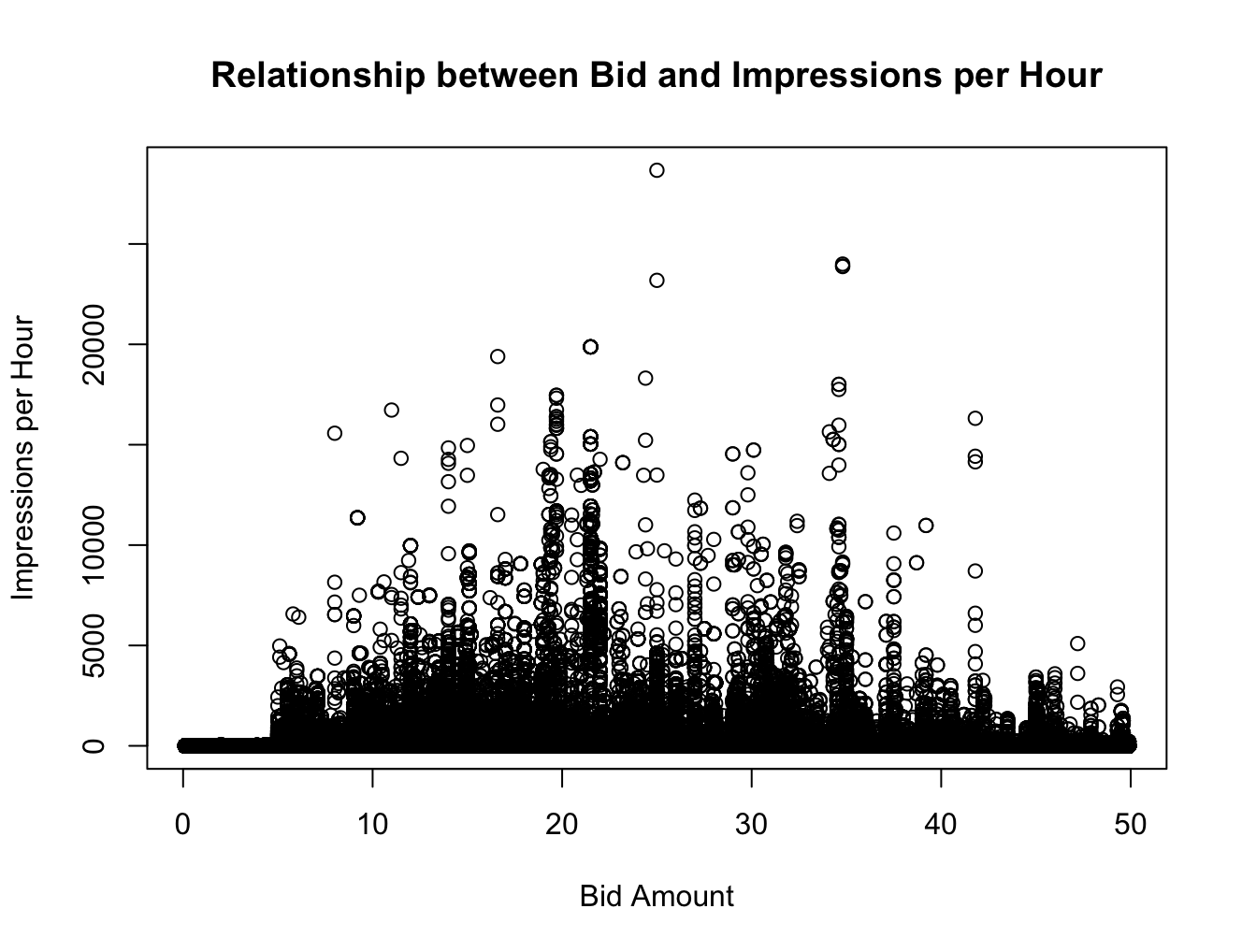}
		\caption{Bid vs. Impressions per Hour}
	\end{minipage}
	\hfill
	\begin{minipage}[b]{0.32\textwidth}
		\includegraphics[width=\textwidth]{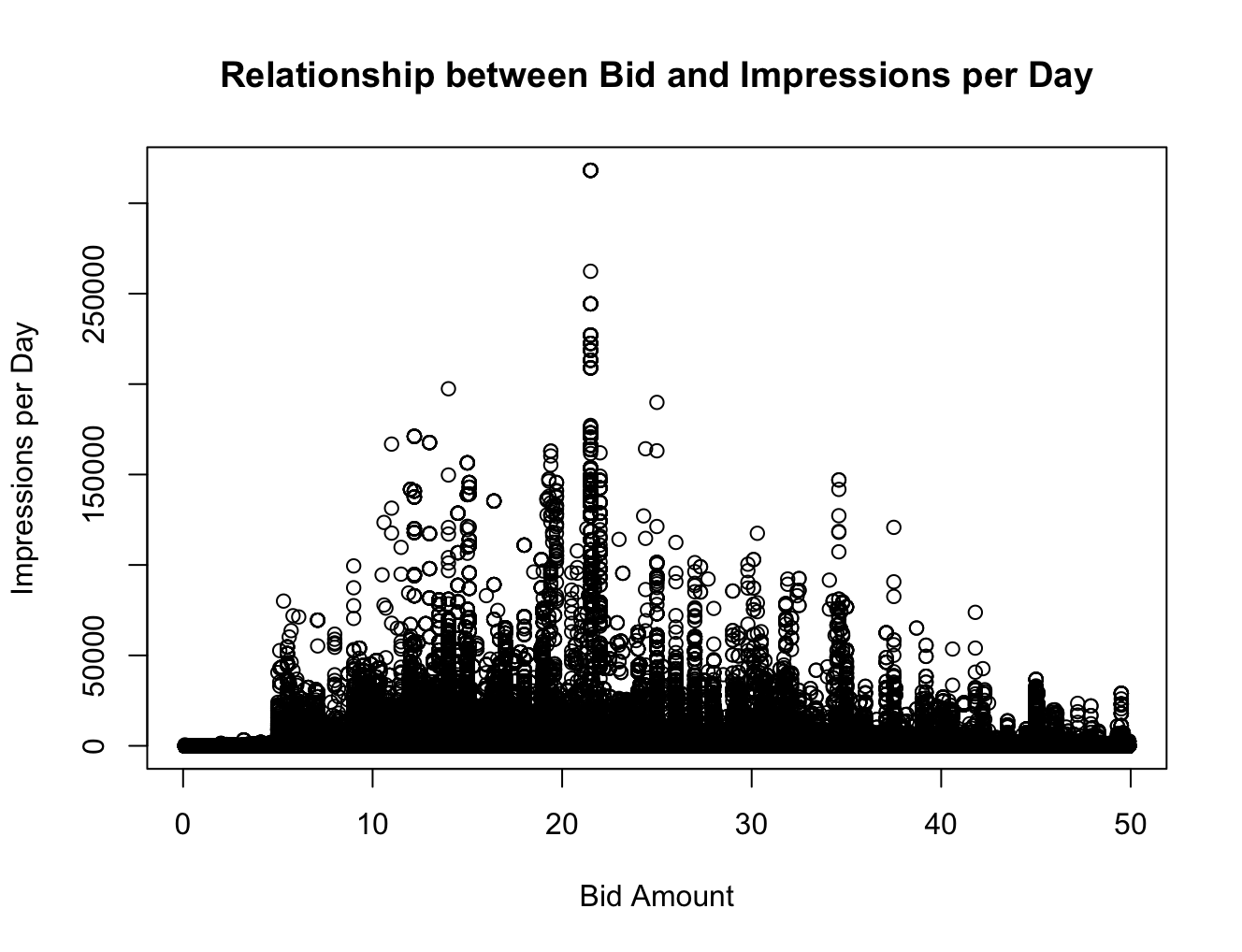}
		\caption{Bid vs. Impressions per Day}
	\end{minipage}
	\hfill
	\begin{minipage}[b]{0.32\textwidth}
		\includegraphics[width=\textwidth]{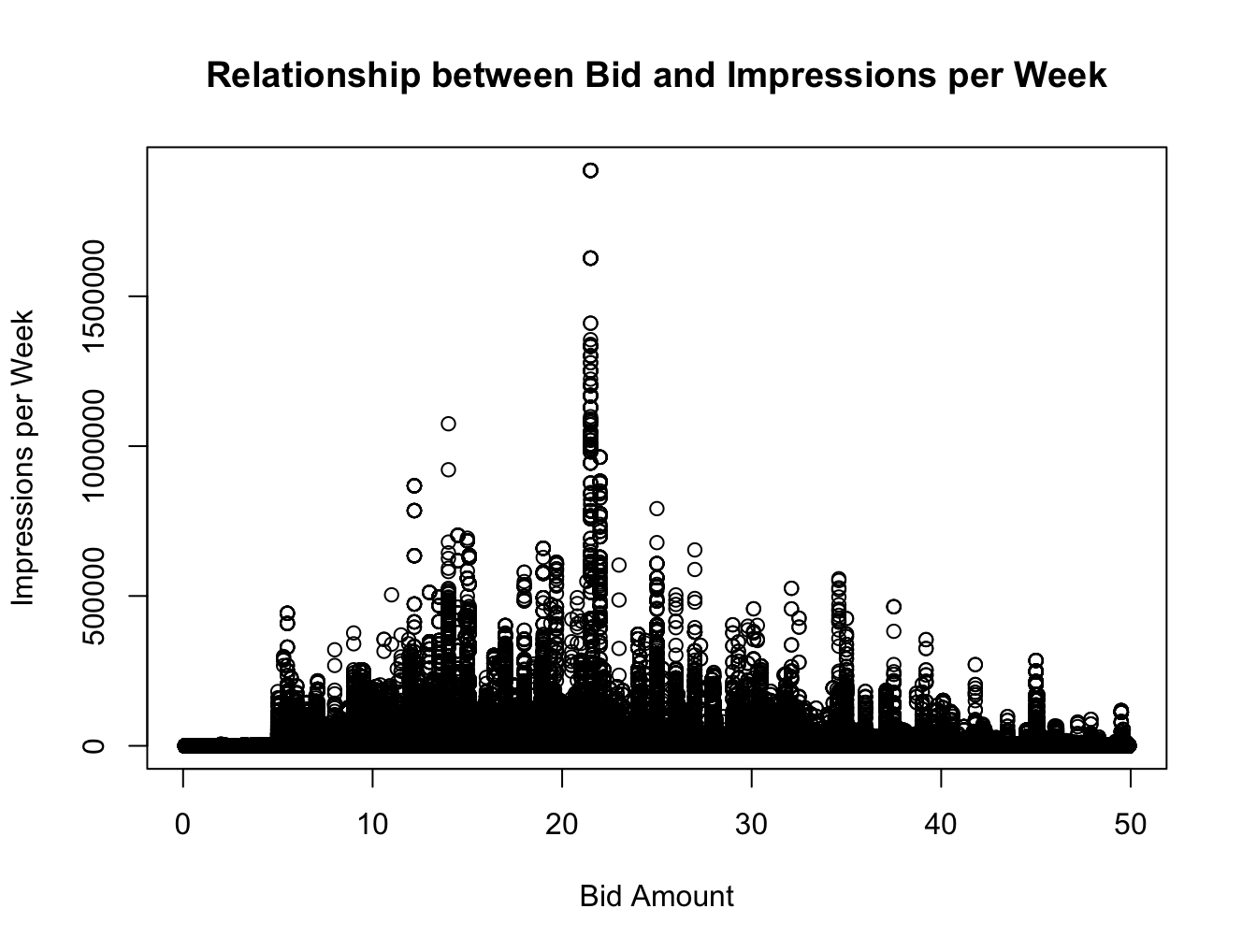}
		\caption{Bid vs. Impressions per Week}
	\end{minipage}
	
	\vspace{1em}
	
	\begin{minipage}[b]{0.32\textwidth}
		\includegraphics[width=\textwidth]{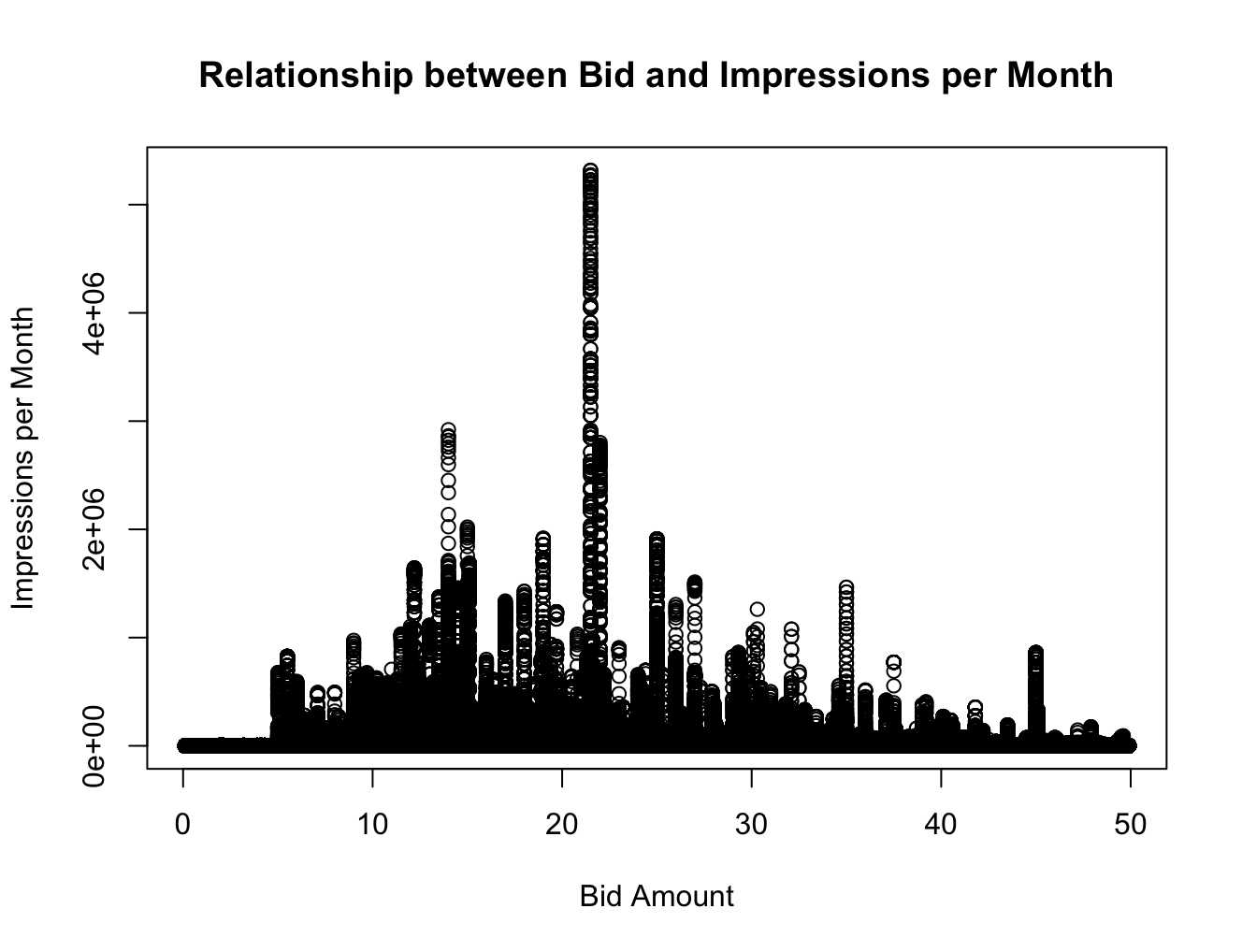}
		\caption{Bid vs. Impressions per Month}
	\end{minipage}
	\hfill
	\begin{minipage}[b]{0.32\textwidth}
		\includegraphics[width=\textwidth]{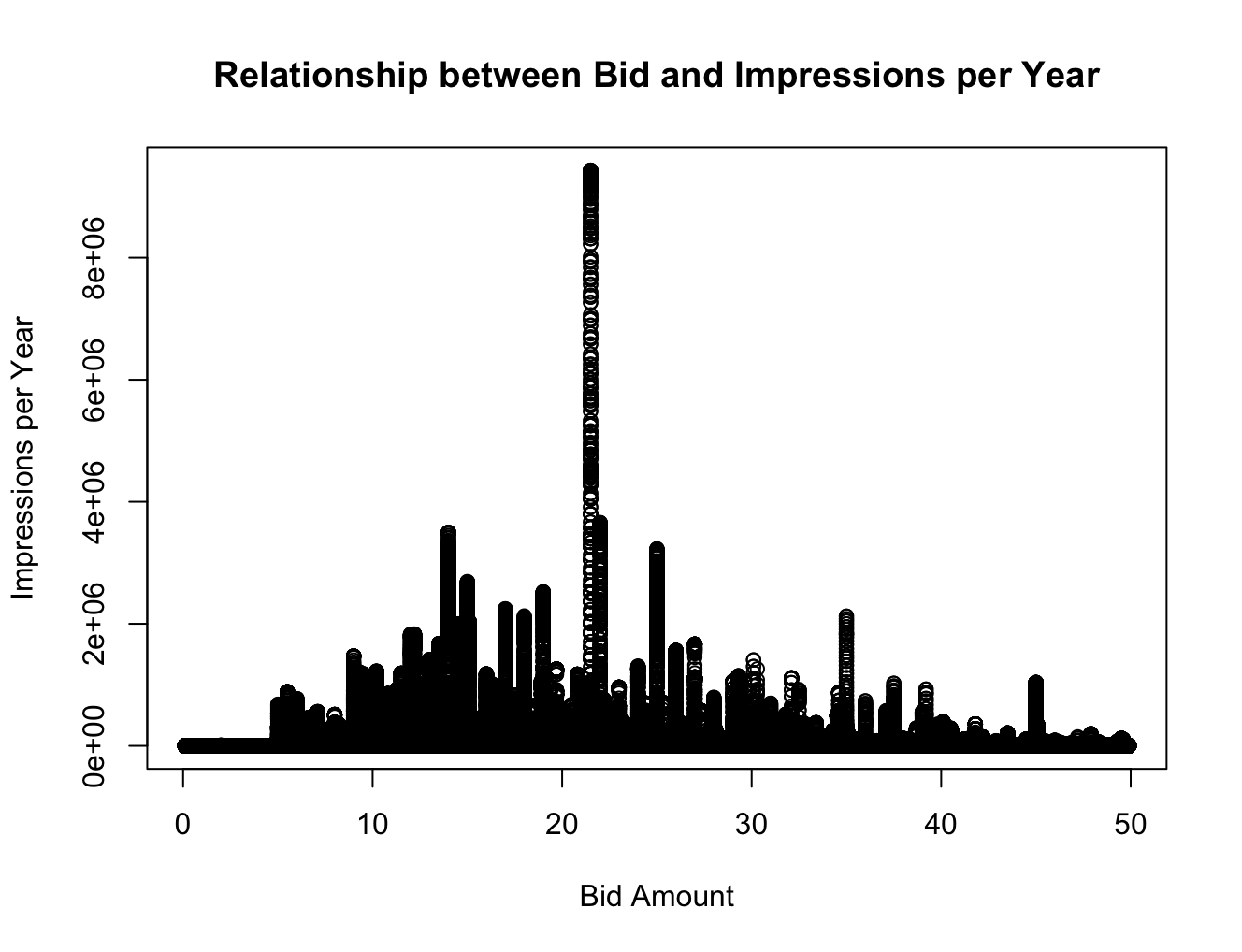}
		\caption{Bid vs. Impressions per Year}
	\end{minipage}
	\hfill
	\begin{minipage}[b]{0.32\textwidth}
		\includegraphics[width=\textwidth]{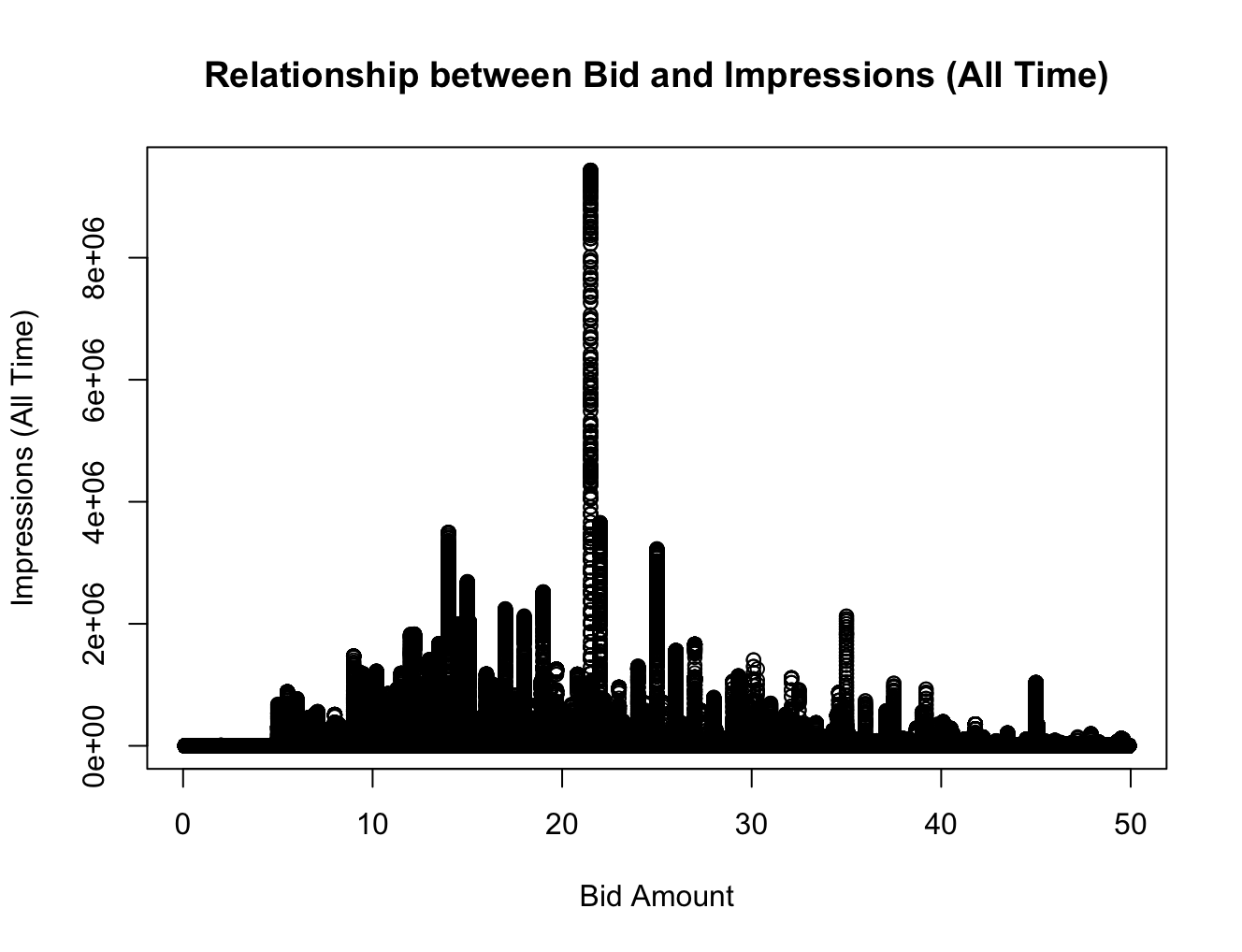}
		\caption{Bid vs. Impressions (All Time)}
	\end{minipage}
\end{figure}

These patterns are best explained through the lens of the Minority Game (MG) framework \cite{challet1997emergence, arthur1994inductive}, a well-established computational model for agent-based adaptation under bounded rationality. The MG originates from the El Farol Bar problem, where agents independently decide whether to participate in a limited-resource environment, achieving success when their actions place them in the minority. Analogously, RTB bidders adaptively optimize their strategies by targeting bid ranges and temporal windows where competitive pressure is minimized.\\
\indent In this context, advertisers are modeled as heterogeneous, inductive agents operating on Demand Side Platforms (DSPs), making repeated decisions based on historical patterns of impressions, auction outcomes, and market density. Success in this environment arises not from following the majority but from deviating intelligently by bidding in sparsely populated regions of the bid landscape to capture impressions at lower cost.\\
\indent Figures~1 through~6 visualize the impression density at different bid levels for a representative bidder. As the time horizon expands (e.g., from hourly to yearly), the stability of the bidder’s strategy becomes increasingly evident. While early-stage patterns (Figures~1–3) exhibit greater noise and volatility, long-term aggregation (Figures~4–6) reveals consistent bid concentration zones, suggesting learned preferences and strategic positioning aligned with minority outcomes.\\
\indent Advertisers base these decisions on high-dimensional data extracted from the bid request issued by the Supply Side Platform (SSP), including user-level features (e.g., age, gender, browsing history), contextual factors (e.g., device type, geolocation, operating system), and auction-specific metadata (e.g., floor price, ad format, position relative to fold). These factors inform audience segmentation, opportunity valuation, and final bid construction. 
In support of the proposed bidding behavior, a recent study shows that the advertiser also does a location-based mobile advertisement.  Location-based advertising on mobile devices has emerged as an essential marketing tool for targeting potential customers. The design of such advertising campaigns is complex, and their effectiveness depends on the ability to collect and examine data that aids in targeting the right customers at the right time and place \cite{keshanian2023mobile}.\\
\indent Given the mutually exclusive nature of ad impressions and the inefficiency of universal high-bid competition, it is not optimal for all bidders to compete uniformly. Instead, bidders partition the bid space $[0, 50]$ (with 0.10 increments) into subranges and selectively target those with lower historical competition. Empirical results suggest that many bidders concentrate within the \$15–\$25 range, while others strategically operate below \$10 or above \$30—indicative of self-organized segmentation and emergent minority group formation.\\
\indent This strategic segmentation functions as a dynamic bid shading mechanism, enabling advertisers to minimize expenditure while maintaining access to desirable impressions. The system evolves into an adaptive, co-adaptive learning environment where agents internalize historical auction feedback to revise strategies—an archetype of an adaptive complex system.\\
\indent In the subsequent section, we simulate a Minority Game model to validate the empirical findings. Agents are assigned distinct bid range preferences and exposed to varying impression landscapes. The simulation demonstrates that advertisers adopting minority strategies outperform majority-aligned bidders in long-run impression acquisition and cost efficiency, substantiating the presence of adaptive minority behavior in RTB auctions.
\subsection{Mathematical Structure}
Consider a scenario in which agents (bidders) must optimize their bids to maximize hourly impressions. The Dynamic Assignment Problem can be formulated as follows:
\begin{itemize}
	\item \textbf{Agents}: A set of bidders \( A = \{a_1, a_2, \ldots, a_n\} \).
	\item \textbf{Tasks}: A set of bidding opportunities \( T = \{t_1, t_2, \ldots, t_m\} \).
	\item \textbf{Time}: A discrete time horizon \( \mathcal{T} \) representing the bidding rounds.
	\item \textbf{Bid Function}:\( c(a_i, t_j, \tau) \) representing the bid amount for agent \( a_i \) to win task \( t_j \) at time \( \tau \).
	\item \textbf{Budget Constraints}: Budget limits and impression targets for each agent.
\end{itemize}
The objective is to determine an assignment \( \pi: A \times \mathcal{T} \to T \) that maximizes the total hourly impressions while minimizing the total bid amount and satisfying the constraints. Let us examine a hypothetical example in the following table.
\begin{table}[H]
	\centering
	\caption{Assignment of bidders to bidding opportunities over time.}
	\label{tab:Table1}
	\begin{tabular}{>{\centering\arraybackslash}m{2cm} >{\centering\arraybackslash}m{2cm} >{\centering\arraybackslash}m{2cm} >{\centering\arraybackslash}m{2cm} >{\centering\arraybackslash}m{2cm} >{\centering\arraybackslash}m{2cm}}
		\toprule
		\textbf{Time Step ($\tau$)} & \textbf{Agent ($a_i$)} & \textbf{Task ($t_j$)} & \textbf{Bid Amount ($c(a_i, t_j, \tau)$)} & \textbf{Hourly Impressions} & \textbf{Assignment Decision} \\
		\midrule
		1 & $a_1$ & $t_1$ & \$5.75 & 120 & Yes \\
		1 & $a_2$ & $t_2$ & \$6.00 & 130 & Yes \\
		1 & $a_3$ & $t_3$ & \$5.70 & 110 & No \\
		\midrule
		2 & $a_1$ & $t_4$ & \$5.50 & 125 & Yes \\
		2 & $a_2$ & $t_5$ & \$5.80 & 135 & Yes \\
		2 & $a_3$ & $t_6$ & \$5.45 & 115 & No \\
		\midrule
		3 & $a_1$ & $t_7$ & \$5.75 & 122 & Yes \\
		3 & $a_2$ & $t_8$ & \$6.20 & 140 & Yes \\
		3 & $a_3$ & $t_9$ & \$5.60 & 118 & No \\
		\midrule
		\vdots & \vdots & \vdots & \vdots & \vdots & \vdots \\
		\midrule
		$T$ & $a_1$ & $t_m$ & \$5.90 & 128 & Yes \\
		$T$ & $a_2$ & $t_{m+1}$ & \$6.30 & 145 & Yes \\
		$T$ & $a_3$ & $t_{m+2}$ & \$5.70 & 120 & No \\
		\bottomrule
	\end{tabular}
	
\end{table}
The interpretation of results from the assignment table necessitates an analysis of bid amounts, hourly impressions, and assignment decisions for each agent across discrete time steps. The following presents a systematic interpretation of the results:
\subsection*{Interpretation of Results}
\paragraph{Time Step Analysis}
\subsubsection*{Time Step 1}
\begin{itemize}
	\item \textbf{Agent \(a_1\)}: Submitted a bid of \$5.75 for task \(t_1\), resulting in 120 hourly impressions. The agent was allocated the task.
	\item \textbf{Agent \(a_2\)}: Submitted a bid of \$6.00 for task \(t_2\), resulting in 130 hourly impressions. The agent was allocated the task.
	\item \textbf{Agent \(a_3\)}: Submitted a bid of \$5.70 for task \(t_3\), resulting in 110 hourly impressions. The agent was not allocated the task.
\end{itemize}
\subsubsection*{Time Step 2}
\begin{itemize}
	\item \textbf{Agent \(a_1\)}: Submitted a bid of \$5.50 for task \(t_4\), resulting in 125 hourly impressions. The agent was allocated the task.
	\item \textbf{Agent \(a_2\)}: Submitted a bid of \$5.80 for task \(t_5\), resulting in 135 hourly impressions. The agent was allocated the task.
	\item \textbf{Agent \(a_3\)}: Submitted a bid of \$5.45 for task \(t_6\), resulting in 115 hourly impressions. The agent was not allocated the task.
\end{itemize}
\subsubsection*{Time Step 3}
\begin{itemize}
	\item \textbf{Agent \(a_1\)}: Submitted a bid of \$5.75 for task \(t_7\), resulting in 122 hourly impressions. The agent was allocated the task.
	\item \textbf{Agent \(a_2\)}: Submitted a bid of \$6.20 for task \(t_8\), resulting in 140 hourly impressions. The agent was allocated the task.
	\item \textbf{Agent \(a_3\)}: Submitted a bid of \$5.60 for task \(t_9\), resulting in 118 hourly impressions. The agent was not allocated the task.
\end{itemize}
\subsubsection*{Time Step \(T\)}
\begin{itemize}
	\item \textbf{Agent \(a_1\)}: Submitted a bid of \$5.90 for task \(t_m\), resulting in 128 hourly impressions. The agent was allocated the task.
	\item \textbf{Agent \(a_2\)}: Submitted a bid of \$6.30 for task \(t_{m+1}\), resulting in 145 hourly impressions. The agent was allocated the task.
	\item \textbf{Agent \(a_3\)}: Submitted a bid of \$5.70 for task \(t_{m+2}\), resulting in 120 hourly impressions. The agent was not allocated the task.
\end{itemize}
\paragraph{Agent Performance}
\begin{itemize}
	\item \textbf{Agent \(a_1\)}: Consistently submits bids in the range of \$5.50 to \$5.90 and is allocated tasks in all time steps. The hourly impressions range from 120 to 128, indicating consistent performance.
	\item \textbf{Agent \(a_2\)}: Submits higher bids ranging from \$5.80 to \$6.30 and is allocated tasks in all time steps. The hourly impressions range from 130 to 145, demonstrating superior performance.
	\item \textbf{Agent \(a_3\)}: Submits bids within the range of agent \(a_1\) but is not allocated tasks in any time step. The hourly impressions are lower, ranging from 110 to 120, suggesting a less effective bidding strategy.
\end{itemize}
\paragraph{Overall Trends}
\begin{itemize}
	\item \textbf{Bid Amounts}: Higher bid amounts generally correlate with increased hourly impressions and a higher probability of task allocation.
	\item \textbf{Hourly Impressions}: Agents submitting higher bid amounts tend to receive more hourly impressions, indicating a positive correlation between bid amounts and impressions.
	\item \textbf{Assignment Decisions}: Agents \(a_1\) and \(a_2\) are consistently allocated tasks, while agent \(a_3\) is not. This suggests that \(a_3\)'s bidding strategy may require modification to enhance competitiveness.
\end{itemize}
The results indicate that agents \(a_1\) and \(a_2\) possess effective bidding strategies that result in consistent task assignments and higher hourly impressions. Agent \(a_3\), however, demonstrates difficulty in securing task assignments, potentially due to a less competitive bidding strategy. To enhance \(a_3\)'s performance, it may be necessary to identify the minority bid range where the bidder would be in the minority and face no competition.\\
Finally, the equilibrium assignment would occur where each agent would be in the minority bid range and the allocation will be non-degeneracy. The equilibrium assignment matrix has been represented using the following matrix for $3\times 3$ and extended for $A\times T$.
\begin{table}[h!]
	\centering
	\caption{Agent Task Allocation Based on Bid Range}
	\begin{tabular}{|l|c|c|c|}
		\hline
		\textbf{Agent $\backslash$ Bid Range} & \textbf{Task $t_1 = [5.50-5.90]$} & \textbf{Task $t_2 = [5.80-6.30]$} & \textbf{Task $t_3$} \\ 
		\hline
		$a_{1}$ & Yes & No & No \\ 
		\hline
		$a_{2}$ & No & Yes & No \\ 
		\hline
		$a_{3}$ & No & No & Yes \\ 
		\hline
	\end{tabular}
	\label{tab:agent_task_allocation}
\end{table}
The Table \ref{tab:agent_task_allocation} has been derived using the theoretical understanding about the dynamic assignment problem from Table 1. It suggest that if the agent $a_{3}$  should find any other bid range for wining ad impression.  In the following subsection  this equilibrium assignment/choice has been derived using Minority Game Theory by applying Algorithm 1 and the Algorithm 2. Finally it has been shown that if Algorithm 1 and 2 are true then the equilibrium assignment will be exactly like this sample three agents assignment problem. This will prove theoretically that the bidders in the ad exchange market is taking minority strategy and identifies their respective bid range to be minor. This is why the bid behavior has been found as shown from Figure 1 to 6. Finally empirically this theoretical model has been proved in section 5.
\subsection{Minority Game and Bidding Strategy}
\paragraph{Minority Game Structure}
Let N agents be in the game, labeled by an index $i = 1,\cdots, N$. This number $N$, usually considered odd, is assumed to be large. This $N$ includes agents who are buying and selling the ad impressions. The game proceeds at discrete iteration steps, denoted by $\tau = 0, 2,.\cdots$. We now translate the main structure of the problem into simple mathematical rules. These ingredients are:
\begin{itemize}
	\item \textit{Agent's actions:} At each step $\tau$ each agent takes a binary decision $b_{i}(\tau) \in \{-1, 1\}$. This is called buying $(-1)$ versus selling $(+1)$decisions. An aggregate of all the decisions at step $\tau$ is defined as $A(\tau) = \dfrac{\sum_{i=1}^{N}b_{i}(\tau)}{\sqrt{N}}$. $A(\tau)$ is divided by $\sqrt{N}$ instead of $N$ because if $N\rightarrow \infty$ then $A(\tau)$ would not lead to zero. This is primarily rooted in statistical and scaling considerations.
	\item \textit{Public Information:} Agents base their decisions on historical market information. Here, the market information includes the past bid and ad impression offers. Moreover, it includes the conversion rate as well in terms of CPC, CPA, etc They are given at step $\tau$ the signs of the overall bids (sale or buy decisions, i.e., whether buyers or sellers were the minority group) over the $M$ most recent steps. This means, \begin{equation*}
		(sign[A(\tau-M)],\cdots,sign[A(\tau-1)]) \in \{-1, 1\}^{M}
	\end{equation*}
	\item \textit{Agent's strategies:} Each agent $i$ has $S$ strategies $\Delta^{ia}$, with $a = 1,\cdots, S$. A strategy $\Delta^{ia}$ defines a mapping from information strings to a recommended trading action:
	\begin{equation*}
		\Delta^{ia}: \{-1, 1\}^{M} \rightarrow \{-1,1\}
	\end{equation*}
	A strategy is a lookup table with $2^{M}$ entries, each being $\pm$. These entries remain fixed throughout the game. If agent $i$ uses strategy $a$ at step $\tau$, then he will act deterministically according to 
	\begin{equation*}
		b_{i}(\tau)= \Delta^{ia}(sign[A(\tau-M)],\cdots,sign[A(\tau-1)]) 
	\end{equation*}
\end{itemize}
\paragraph{Algorithm 1} 
The \textit{Minority Bidding Simulation Algorithm} is designed to simulate a bidding process in which multiple agents compete to win by bidding below the median bid. This algorithm is particularly useful for examining the dynamics of competitive bidding environments and the strategies agents may employ to maximize their success in achieving a low bid while securing high impressions.\\	
Agents adaptively adjust their bids based on recent performance history. If an agent fails to win in the last \texttt{HISTORY\_LENGTH} rounds, it modifies its bid randomly within a bounded interval. This perturbation enables strategic exploration of the bid space and encourages convergence toward effective bidding behavior.\\
The algorithm concurrently tracks round-wise metrics, including the average bid and the number of winners. These indicators provide insight into the emerging dynamics of bid distribution and strategic diversity among agents.\\
Using the specified parameter configuration, Algorithm 2 is executed to simulate the evolution of bidding behavior over repeated rounds.
\paragraph*{Parameters:}
\begin{itemize}
	\item NUM\_AGENTS: The number of bidding agents, set to 100.
	\item NUM\_ROUNDS: The number of bidding rounds, set to 50.
	\item HISTORY\_LENGTH: The memory length for decision-making, set to 5.
\end{itemize}
\begin{algorithm}[H]
	\caption{Minority Bidding Simulation Algorithm}
	\begin{algorithmic}[1]
		\State Initialize agents with random bids and empty memory
		\For{$round = 1$ to $NUM\_ROUNDS$}
		\State \textsc{RunAuctionRound}(agents)
		\State Record average bid and number of winners
		\State Print round results
		\EndFor
		
		\Procedure{RunAuctionRound}{agents}
		\State Collect bids from all agents
		\State Calculate median bid
		\State Determine winners (agents with bids below the median)
		\For{each agent}
		\State Update success status
		\State Update memory with recent outcome
		\State Increment success count if winner
		\If{recent performance is zero}
		\State Adjust bid randomly within $[-0.5, 0.5]$
		\State Clamp bid to range $[5, 10]$
		\EndIf
		\EndFor
		\EndProcedure
	\end{algorithmic}
\end{algorithm}
\section*{Minority Bidding Simulation Algorithm}	
\begin{algorithm}[H]
	\caption{Minority Bidding Simulation}
	\label{algo:minority_bidding}
	\scriptsize 
	\resizebox{\textwidth}{!}{
		\begin{minipage}{\textwidth}
			\begin{algorithmic}[1]
				\State \textbf{Parameters:}
				\State $NUM\_AGENTS \gets 100$ \Comment{Number of bidding agents}
				\State $NUM\_ROUNDS \gets 50$ \Comment{Number of bidding rounds}
				\State $HISTORY\_LENGTH \gets 5$ \Comment{Memory length for decision-making}
				
				\State \textbf{Initialize agents:}
				\For{$i = 1$ to $NUM\_AGENTS$}
				\State Initialize agent $i$ with:
				\State \hspace{1cm} $bid \gets \text{random value between [5, 10]}$
				\State \hspace{1cm} $memory \gets []$ \Comment{Tracks recent outcomes}
				\State \hspace{1cm} $success\_count \gets 0$ \Comment{Number of successful bids}
				\EndFor
				
				\Procedure{RunAuctionRound}{agents}
				\State \textbf{Collect bids:}
				\State $bids \gets [bid_1, bid_2, \dots, bid_{NUM\_AGENTS}]$ 
				\State $median\_bid \gets \text{Median}(bids)$
				
				\State \textbf{Determine winners:}
				\State $winners \gets \{i : bid_i < median\_bid\}$ 
				
				\State \textbf{Update agents:}
				\For{$i = 1$ to $NUM\_AGENTS$}
				\State $success \gets 1 \text{ if } i \in winners \text{ else } 0$
				\State Append $success$ to $agent[i].memory$
				\State $agent[i].success\_count \gets agent[i].success\_count + success$
				
				\If{Length of $agent[i].memory \geq HISTORY\_LENGTH$}
				\State $recent\_performance \gets \text{Sum of last HISTORY\_LENGTH outcomes}$
				\If{$recent\_performance == 0$}
				\State Adjust $agent[i].bid$ by $random\_adjustment \in [-0.5, 0.5]$
				\State Clamp $agent[i].bid$ to range $[5, 10]$
				\EndIf
				\EndIf
				\EndFor
				\EndProcedure
				
				\State \textbf{Run Simulation:}
				\State $avg\_bids \gets []$, $total\_winners \gets []$
				\For{$round = 1$ to $NUM\_ROUNDS$}
				\State $(avg\_bid, num\_winners) \gets \textsc{RunAuctionRound}(agents)$
				\State Append $avg\_bid$ to $avg\_bids$
				\State Append $num\_winners$ to $total\_winners$
				\State Print ``Round $round$: Avg Bid = $avg\_bid$, Winners = $num\_winners$''
				\EndFor
			\end{algorithmic}
		\end{minipage}
	} 
\end{algorithm}

\begin{figure}[h!]
	\centering
	\begin{minipage}{0.48\linewidth}
		\centering
		\includegraphics[width=\linewidth]{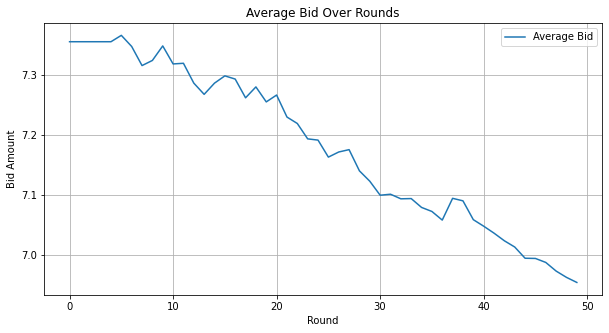} 
		\caption{Average Bids Over Rounds}
		\label{fig:figure9}
	\end{minipage}%
	\hfill
	\begin{minipage}{0.48\linewidth}
		\centering
		\includegraphics[width=\linewidth]{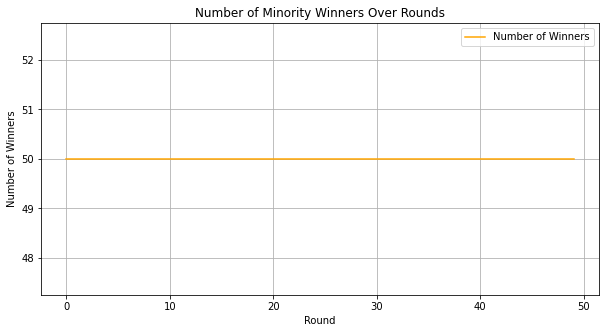} 
		\caption{Number of Minority Winners}
		\label{fig:figure10}
	\end{minipage}
	\caption{Simulation Results}
	\label{fig:figure9_10}
\end{figure}
\begin{itemize}
	\item Given the definitions above, $\Delta^{ia} $ is good at stage $\tau$ if $\Delta^{ia}(sign[A(\tau-M)],\cdots,sign[A(\tau-1)]) = -sign[A(\tau)]$. 
	\item \textit{Strategy valuations} Each agent $i$ keeps track of a valuation $v_{ia}$ for each of his strategies, measuring their track records :
	\begin{equation*}
		v_{ia}(\tau+1)=v_{ia}(\tau)-A(\tau)\Delta^{ia}(sign[A(\tau-M)],\cdots,sign[A(\tau-1)]) 
	\end{equation*}
	\item \textit{Dynamic strategy selection:} At each step $\tau$ of the game each agent $i$ will select his best strategy $a^{i}(\tau)$ at that stage of the process, defined as
	\begin{equation*}
		a_{i}(\tau) = \arg\max_{a \in \{1, \ldots, S\}} v_{ia}(\tau)
	\end{equation*}
\end{itemize}
\indent Based on this game structure, a simulation has been run to test the minority strategy as provided in Algorithm 2 and produces results in Figure \ref{fig:figure9} and \ref{fig:figure10}. These two figures have been put together as "Simulation Results" in Figure \ref{fig:figure9_10}.\\
Modeling an Agent-Based Minority Game with Dynamic Bid Adjustments by Participants. Each iteration involves agents submitting bids for advertising space. The minority group, comprising agents with the lowest aggregate bids, succeeds in the auction. Graphical representations illustrate the progression of mean bids and successful agent quantities over time.\\
The correlation between bids and advertisement impressions for individual agents has been graphically represented. The data visualizations indicate that higher or lower bids do not consistently correspond to increased or decreased impressions. Agents submitting lower bids occasionally receive impressions when they remain in the minority. Certain agents with higher bids fail to secure hourly advertisement placements.\\
From Figures \ref{fig:figure9} and \ref{fig:figure10}, it is clear that as the bidders take a minority strategy, the average bid is reducing over rounds. Initially, when the bidding started, the competition was very high; as a result, the average bid was high. In the subsequent rounds, the bidders can identify the minority strategy and not participate in the majority group, so the average bid value is reduced. From Figure  \ref{fig:figure10}, it is clear that the number of winners is fixed as they have already taken the minority strategy and are there.\\
\paragraph{Algorithm3} The algorithm "Plot Bid vs Hourly Impressions for Each Agent" is designed to visualize the relationship between the bid amounts and hourly impressions for multiple agents. This visualization facilitates understanding how different bidding strategies affect the impressions each agent receives over time. The algorithm inputs a data dictionary and produces scatter plots for each agent, illustrating the bid amount versus hourly impressions.\\
The algorithm comprises several key steps: initialization, setting up the plot grid, extracting and plotting data for each agent, handling unused subplots, and finalizing the plot. Below is a detailed explanation of each part.\\
The scatter plots visually represent how each agent's bid amount correlates with the hourly impressions they receive. This can assist in identifying patterns and optimizing bidding strategies for improved performance.\\
The algorithm focuses on visualizing the relationship between bid amounts and hourly impressions. The scatter plots serve as a performance metric, enabling users to compare different agents and understand the effectiveness of their bidding strategies.\\
\indent If all agents are following the minority strategy, then each agent or bidder will be operating in a given range of bid and impression combinations. Based on Algorithm 3 and  Figure \ref{fig:figure15}, each sample of ten agents has different areas of bid-impression combinations. The area of bidding is different for each of the ten bidders. Thus far, it is evident that if a bidder selects a minority strategy in the ad exchange market, that bidder does not prefer to offer the high bid to gain ad impressions. Based on this, the following section and subsections explain empirical evidence for Algorithms 2 and 3.
\paragraph{Pseudocode: Plot Bid vs Hourly Impressions for Each Agent}
\begin{algorithm}[H]
	\caption{Plot Bid vs Hourly Impressions}
	\label{fig:figure14}
	\scriptsize 
	\resizebox{\textwidth}{!}{
		\begin{minipage}{\textwidth}
			\begin{algorithmic}[1]
				\State \textbf{Input:}
				\State Data dictionary containing:
				\State \hspace{0.5cm} Columns: \texttt{date}, \texttt{hour}, \texttt{adid\_anonymized}, \texttt{bid}, \texttt{imps\_alltime}, \texttt{imps\_hour}, \texttt{imps\_day}, \texttt{imps\_week}, \texttt{imps\_month}, \texttt{imps\_year}
				\State \textbf{Output:} Scatter plots for each agent showing Bid Amount vs Hourly Impressions
				
				\State \textbf{Initialize:}
				\State $df \gets \text{Convert data dictionary to a DataFrame}$
				\State $num\_agents \gets \text{Length of DataFrame (number of agents)}$
				
				\State \textbf{Setup Plot Grid:}
				\State Create a grid of subplots with $nrows = \frac{num\_agents}{2}$ and $ncols = 2$
				\State Flatten the grid to iterate over all axes
				
				\For{$i = 0$ to $num\_agents - 1$}
				\State \textbf{Extract Agent Data:}
				\State $agent\_bid \gets df['bid'][i]$
				\State $agent\_imps\_hour \gets df['imps\_hour'][i]$
				
				\State \textbf{Plot Data:}
				\State Scatter plot $(agent\_bid, agent\_imps\_hour)$ on axis $axes[i]$
				\State Set title: \texttt{Agent i+1: Bid vs Hourly Impressions}
				\State Set x-label: \texttt{Bid Amount}
				\State Set y-label: \texttt{Hourly Impressions}
				\State Enable grid on axis
				\EndFor
				
				\State \textbf{Handle Unused Subplots:}
				\For{$j = num\_agents$ to $len(axes) - 1$}
				\State Delete unused axis $axes[j]$
				\EndFor
				
				\State \textbf{Finalize Plot:}
				\State Adjust layout for clarity
				\State Show the final figure with all subplots
			\end{algorithmic}
		\end{minipage}
	}
\end{algorithm}
\begin{figure} [H]
	\centering
	\includegraphics[width=0.75 \textwidth]{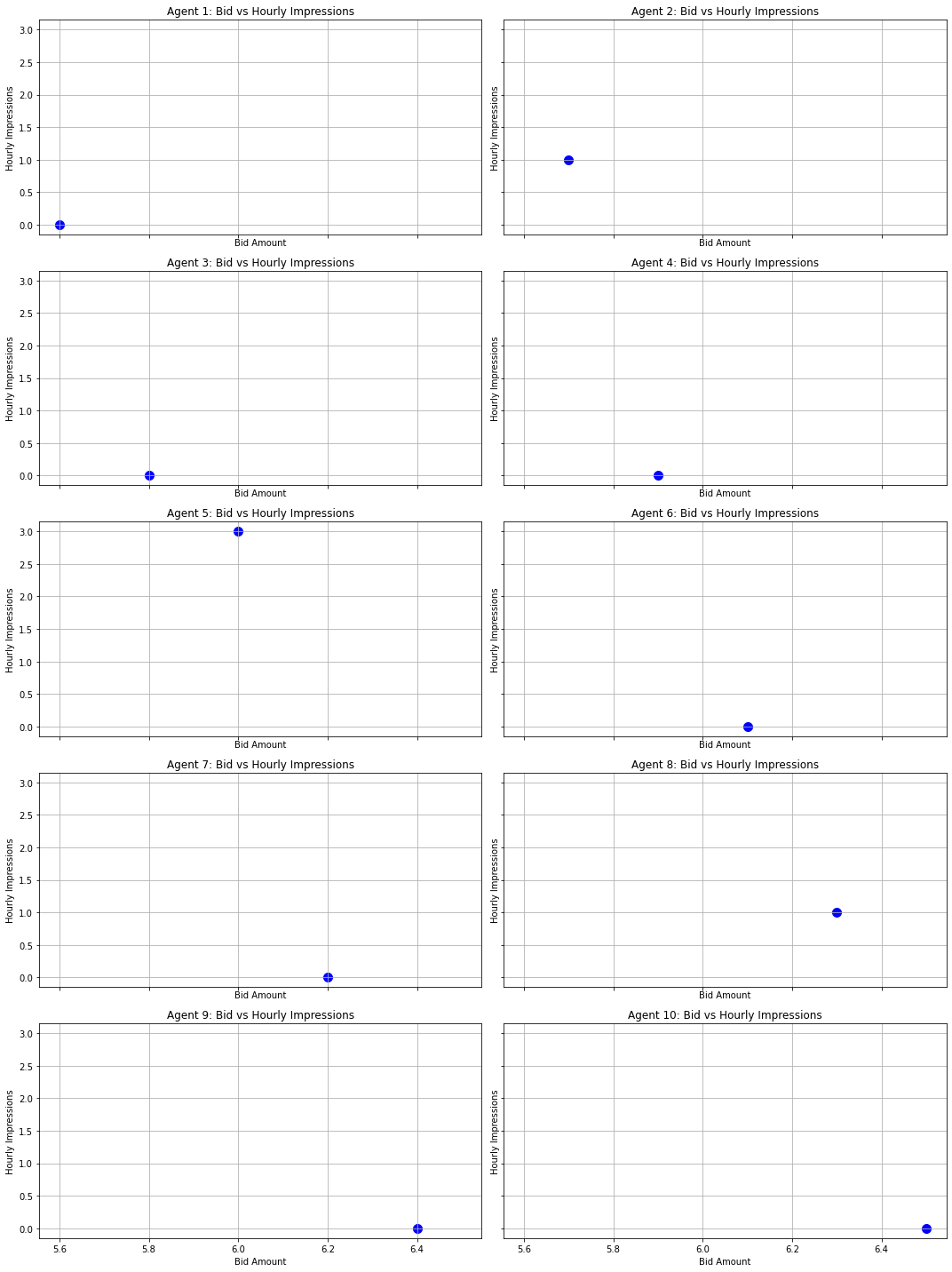}
	\caption{Minority Behavior Detection}
	\label{fig:figure15}
\end{figure} 
\section{Main Results}
The section formalizes the above analytical findings in terms of general theory. 
\begin{proposition}[Existence of Minority Partition]
	Let $A = \{a_1,\ldots,a_N\}$ denote the set of bidders in a repeated first-price RTB auction, and let $B=[0,50] \text {or} B=[0,\bar{B}]$ be the discrete bid space with increments $\delta=0.10$. 
	If each bidder chooses a bid range according to the Minority Game dynamics with finite memory $M$ and finite strategies $S$, then the bid space admits a partition. 
	\[
	B = \bigcup_{k=1}^{K} B_k, \quad B_i \cap B_j = \varnothing, \; i\neq j,
	\]
	In every round, a non-empty subset of agents occupies a minority sub-range $B_k$. Here $\quad B_i \cap B_j = \varnothing, \; i\neq j,$ signifies that each bid range is unique for some minor bidder. In addition, each minor bidder selects only a unique bid range and does not participate in all the minor bid ranges.
\end{proposition}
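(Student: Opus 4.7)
The plan is to prove the proposition in three steps: first, establish that the state space of bids and strategies is finite; second, construct the partition round by round by grouping bid levels according to which agents' argmax strategies land in them; third, verify disjointness and the non-emptiness of the minority sub-range. Because the Minority Game machinery developed in Section~4.3 is entirely deterministic once valuations are fixed, the argument is essentially combinatorial rather than analytic.

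First I would observe that the bid space $B$ contains only $L = \lfloor \bar{B}/\delta \rfloor + 1$ admissible levels, and each of the $N$ agents carries $S$ lookup-table strategies mapping the finite history space $\{-1,+1\}^M$ into recommended actions. Under the dynamic strategy-selection rule $a_i(\tau) = \arg\max_{a} v_{ia}(\tau)$, every agent deterministically produces a bid $b_i(\tau) \in B$, so the empirical occupation $n(b,\tau) = |\{i : b_i(\tau) = b\}|$ is well-defined on the finite grid. This reduces the question to a combinatorial partition of a finite set of bid levels populated by a finite number of deterministic choosers.

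Next I would construct the partition explicitly. For each round $\tau$ define the coalition $C_b(\tau) = \{i : b_i(\tau) = b\}$, and declare two levels $b, b' \in B$ equivalent whenever $C_b(\tau) = C_{b'}(\tau)$. The equivalence classes $\{B_k\}_{k=1}^{K}$ are pairwise disjoint by construction and their union exhausts every occupied bid level; unoccupied levels can be absorbed into a residual block $B_0$ without affecting the statement. The minority sub-range is then identified as $B_{k^{*}(\tau)}$, where $k^{*}(\tau) = \arg\min_{k : B_k \neq \varnothing} |C_b(\tau)|$ for any $b \in B_k$, and by the pigeonhole principle with $N$ finite and at least two non-empty blocks, this minority block has strictly below-average occupancy and is therefore non-empty. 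The two informal clauses in the statement translate cleanly: disjointness gives the uniqueness of each bid range to a coalition, and each agent belongs to exactly one coalition $C_{b_i(\tau)}(\tau)$ in each round by determinism of the argmax rule.

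The main obstacle I anticipate is justifying the persistence clause, namely that each minor bidder selects a \emph{single} bid range rather than cycling through all minor ranges over time. Round-by-round existence of the partition follows easily from the argument above, but the proposition as worded suggests an asymptotic attachment between agents and sub-ranges. To close this gap I would invoke the standard finite-state recurrence property of Minority Game valuations: since the joint process $(v_{ia}(\tau))_{i,a}$ evolves on a bounded lattice determined by the finite memory $M$, the finite strategy pool $S$, and the bounded increments $\pm A(\tau)\Delta^{ia}(\cdot)$, the dynamics admit a recurrent class on which the argmax selection stabilizes modulo a deterministic tie-breaking rule. Restricting the partition construction to this recurrent class then gives the desired agent-specific sub-range attachment, closing the argument.
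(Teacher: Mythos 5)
Your core argument coincides with the paper's own proof, which is deliberately minimal: the paper models a round as an assignment $s:A\to K$ of agents into finitely many bins, notes that the set of occupied bins is nonempty (since $A$ is nonempty) and finite, and concludes that a count-minimizing occupied bin exists. Your first two steps reproduce exactly this -- finiteness of the bid grid, determinism of the argmax selection, and a minimizer over occupied blocks. Two small remarks on your construction: the equivalence relation $C_b(\tau)=C_{b'}(\tau)$ degenerates, because coalitions at distinct occupied levels are disjoint (each agent submits one bid), so no two occupied levels are ever equivalent and your partition collapses to singleton occupied levels plus one residual empty block; and the ``strictly below-average occupancy'' claim via pigeonhole is not needed and not guaranteed (all occupied blocks could be equally populated) -- the paper's minority bin only requires $\le$ among occupied bins, and its nonemptiness is by definition of ``occupied,'' not by pigeonhole.

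The genuine divergence is your third step. You read the clause ``each minor bidder selects only a unique bid range'' as an asymptotic attachment of agents to sub-ranges and try to prove it via a finite-state recurrence of the valuation process. The paper does not attempt this: in its proof the clause is simply the observation that $s$ is a function, i.e., each agent occupies exactly one bin per round. Your proposed closure would not go through as stated: the valuations $v_{ia}(\tau)$ are not confined to a bounded lattice, since they accumulate increments $-A(\tau)\Delta^{ia}(\cdot)$ with $A(\tau)$ real-valued and unbounded partial sums, and it is a standard feature of Minority Game dynamics that the argmax selection need not stabilize -- agents whose strategy scores remain close switch indefinitely. So if you want the per-round statement, drop the recurrence argument as unnecessary; if you want the persistence statement, it requires hypotheses (e.g., eventual separation of strategy scores) that neither you nor the paper supplies.
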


\begin{proof}
The proof is given step by step below.
\end{proof}
\subsection*{Existence of a Minority Partition (Discrete Bins Case)}

Let $A$ be a finite set of agents and $K$ denote a finite set of bid bins. 
(e.g., a discrete bid space). 
A \emph{bin state} is a function.
\[
s: A \to K,
\]
This assigns each agent the bin in which her bid falls. This choice is independent and not exogenously determined. This means each bidder selects the bid range or bin according to their choice.

\begin{definition}[Bin occupancy.]
For each bin $k \in K$, define the \emph{Bin Count} as
\[
\mathrm{BinCount}(s,k) = \bigl|\{\, i \in A : s(i) = k \,\}\bigr|,
\]
That is, the number of agents mapped into bin $k$ under the assignment $s$. This means the assignment is non-homogeneous. Each $i$ selects a particular bin, assuming their parameters are given. So it is choice-dependent.
\end{definition}

\begin{definition}[Minority bin.]

	A bin $k \in K$ is called an \emph{occupied minority bin} if
	\[
	\mathrm{BinCount}(s,k) > 0 
	\quad\text{and}\quad
	\mathrm{BinCount}(s,k) \leq \mathrm{BinCount}(s,k')
	\quad\text{for all } k' \in K \text{ with } \mathrm{BinCount}(s,k')>0.
	\]
	In other words, $k$ is occupied by at least one agent, and no other occupied bin has strictly fewer agents.
\end{definition}

\begin{lemma}[Existence of occupied minority bin]
	For every finite set of agents $A$, every finite set of bins $K$, 
	and every assignment $s : A \to K$, 
	There exists at least one occupied minority bin.
\end{lemma}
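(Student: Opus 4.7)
The plan is to produce an occupied minority bin by direct construction, applying the well-ordering principle to the finite set of occupancy counts over the occupied bins. The argument is constructive and requires no appeal to fixed-point machinery or to the Minority Game dynamics themselves; the lemma is a purely combinatorial fact about any assignment $s : A \to K$.

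First, I would isolate the collection of occupied bins
$O = \{\, k \in K : \mathrm{BinCount}(s,k) > 0 \,\} \subseteq K$
and verify that $O$ is non-empty. This uses the implicit assumption $|A| \geq 1$, which is natural in the RTB setting where at least one bidder participates: fixing any agent $i \in A$, the bin $s(i)$ contains $i$, so $\mathrm{BinCount}(s,s(i)) \geq 1$ and hence $s(i) \in O$. Since $K$ is finite and $O \subseteq K$, the set $O$ is itself finite and non-empty.

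Next, I would form the finite non-empty set of positive integers
$C = \{\, \mathrm{BinCount}(s,k) : k \in O \,\} \subseteq \mathbb{Z}_{>0}$
and invoke the well-ordering principle to extract $m^{*} = \min C$. Choosing any $k^{*} \in O$ with $\mathrm{BinCount}(s,k^{*}) = m^{*}$, I then verify the two defining clauses of an occupied minority bin: (i) $\mathrm{BinCount}(s,k^{*}) = m^{*} > 0$ because $m^{*} \in \mathbb{Z}_{>0}$; and (ii) for every $k' \in K$ with $\mathrm{BinCount}(s,k') > 0$, we have $k' \in O$, so $\mathrm{BinCount}(s,k') \in C$, and minimality of $m^{*}$ gives $\mathrm{BinCount}(s,k^{*}) \leq \mathrm{BinCount}(s,k')$. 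Both conditions hold, so $k^{*}$ is an occupied minority bin.

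I do not foresee any substantive obstacle: the statement is essentially the well-ordering of a finite non-empty subset of $\mathbb{Z}_{>0}$. The only subtlety worth flagging is the degenerate case $A = \emptyset$, in which $O = \emptyset$ and no occupied bin exists; the lemma should be read with the standing hypothesis $N \geq 1$ from the Minority Game setup, which renders this case vacuous. I would also note in passing that the minimizer $k^{*}$ need not be unique, but the lemma asserts only existence, so uniqueness is not required.
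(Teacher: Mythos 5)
Your proposal is correct and follows essentially the same route as the paper's proof: exhibit a non-empty set of occupied bins via any single agent, then take a count-minimizing bin over that finite set and check the two defining conditions. Your explicit flagging of the degenerate case $A=\varnothing$ is a minor addition; the paper simply assumes $A$ nonempty at the outset.
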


\begin{proof}
	Since $A$ is nonempty, pick any agent $i_0 \in A$ and let $k_0 = s(i_0)$. 
	Then $\mathrm{BinCount}(s,k_0) \ge 1$, so the set of occupied bins 
	\[
	\{\, k \in K : \mathrm{BinCount}(s,k) > 0 \,\}
	\]
	Is nonempty. Because $K$ is finite, this set admits a bin $k_{\min}$ minimizing the count. 
	By construction $\mathrm{BinCount}(s,k_{\min})>0$ 
	and for every occupied $k'$, one has 
	$\mathrm{BinCount}(s,k_{\min}) \le \mathrm{BinCount}(s,k')$. 
	Thus $k_{\min}$ is an occupied minority bin.\\
	$BinCount(s,k_{\min})$ is feasible because the set $B$ is closed and  bounded, i.e. compact.
\end{proof}

\begin{proposition}[Existence of Minority Partition, discrete bins]
	In any round of the repeated auction, represented by an assignment 
	$s: A \to K$ of agents into finitely many bins, 
	There exists at least one occupied minority bin. 
	In every round, there is a nonempty sub-range of the discretized bid space 
	occupied by strictly fewer bidders than all others.
\end{proposition}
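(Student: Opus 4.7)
The plan is to reduce the proposition to a single application of the preceding Lemma, carried out once per round of the repeated auction. Fix an arbitrary round $\tau$. The outcome of that round is completely encoded by the assignment $s_\tau : A \to K$ that sends each bidder to the unique discretized bin containing her submitted bid. Both sets are finite: $A = \{a_1, \ldots, a_N\}$ is the finite pool of active bidders by hypothesis, while $K$ is finite because the bid space $B = [0, \bar{B}]$ is compact and the increment $\delta = 0.10$ induces a finite grid on it. Hence the hypotheses of the Lemma are satisfied for $s_\tau$.

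First, I would invoke the Lemma on $s_\tau$ to extract an occupied minority bin $k_{\min}(\tau)$. The underlying argument is essentially a one-line pigeonhole: the collection of occupied bins $\{\, k \in K : \mathrm{BinCount}(s_\tau, k) > 0 \,\}$ is nonempty because $A$ is nonempty, and any integer-valued function on a nonempty finite set attains its minimum. Selecting $k_{\min}(\tau) \in \arg\min\{\, \mathrm{BinCount}(s_\tau, k) : \mathrm{BinCount}(s_\tau, k) > 0 \,\}$ produces a bin occupied by at least one agent yet by no more agents than any other occupied bin.

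Next, I would translate this back into the language of the proposition by identifying $k_{\min}(\tau)$ with a nonempty sub-range of the discretized bid space. The phrase \emph{strictly fewer bidders than all others} admits two readings. Under the natural interpretation that quantifies over the remaining occupied bins, the conclusion is immediate from the Lemma whenever the minimizer is unique; when several occupied bins tie at the minimum count, $k_{\min}(\tau)$ still has at most as many occupants as every other occupied bin, which matches the (non-strict) definition of \emph{minority bin} used in the Lemma. If one insists on strict inequality, one can either merge tied minimizers into a single sub-range and treat the union as the minority stratum, or restrict attention to a round in which the Minority Game dynamics, via the valuations $v_{ia}$ and best-response rule, break the tie by redistributing at least one agent.

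The main obstacle is therefore not the existence claim itself, which is almost trivial once finiteness of $A$ and $K$ is in place, but rather reconciling the strict/non-strict wording with the behavioural content the proposition is meant to capture, namely that the Minority Game adaptation has somewhere to route its agents in every round. I would close the argument by observing that the reasoning applies verbatim to each $\tau = 0, 1, 2, \ldots$, so that the round-indexed family $\{k_{\min}(\tau)\}_\tau$ witnesses the per-round existence stated in the proposition and, taken across rounds, furnishes the partition structure invoked in the earlier Proposition on the Existence of a Minority Partition.
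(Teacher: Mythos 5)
Your proof takes essentially the same route as the paper: the proposition is obtained by applying the preceding lemma (finiteness of $A$ and $K$, hence the minimum of $\mathrm{BinCount}$ over the nonempty set of occupied bins is attained) once per round, and the paper supplies no additional argument beyond that reduction. Your observation that the proposition's phrase ``strictly fewer bidders than all others'' is stronger than the non-strict minimality the lemma actually delivers is a fair catch that the paper itself glosses over, and your proposed resolutions (merging tied minimizers or appealing to the tie-breaking dynamics) are reasonable ways to patch the wording.
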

\begin{remark}
	This means Propositions 1 \& 2, along with Lemma 1, confirm that in each round of the auction, there exists a minority of bidders who try to be in the minority bid range. This is acting as the opposite of the standard truthful bidding mechanism. Many bidders do not reveal their preferences in that bid range. As a result, the bid range is acting as a minority bid range.  
\end{remark}
\begin{lemma}[Endogenous Bid Shading]
	Let $v_i$ denote the private valuation of bidder $a_i$ and $b_i(\tau)$ the observed bid at round $\tau$. 
	Under the equilibrium partition identified in Proposition 1, the bid satisfies.
	\[
	\mathbb{E}[b_i(\tau)] < v_i, \quad \forall i \in A,
	\]
	With strict inequality whenever $a_i$ persists in a minority sub-range. This means selecting a minority bid range indicates the strategy of bid shading.
\end{lemma}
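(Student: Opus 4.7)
The plan is to prove the lemma by combining individual rationality in the first-price auction with the minority-partition structure established in Proposition 1. First, I would establish a weak upper bound $b_i(\tau) \le v_i$ almost surely via the strategy-scoring dynamics: in a first-price auction, any realized bid strictly above $v_i$ yields a negative surplus whenever it wins, so under the valuation update $v_{ia}(\tau+1) = v_{ia}(\tau) - A(\tau)\Delta^{ia}(\cdot)$ together with the arg-max selection rule $a_i(\tau) = \arg\max_{a} v_{ia}(\tau)$, any strategy that persistently recommends bids above $v_i$ is eventually pruned. This delivers $\mathbb{E}[b_i(\tau)] \le v_i$ for every $i \in A$.

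Next, I would invoke Proposition 1 and Lemma 1 to place bidder $a_i$ in a specific minority sub-range $B_{k_i}$ of the partition $B = \bigcup_k B_k$. Persistence in $B_{k_i}$ means $b_i(\tau) \in B_{k_i}$ for every round in the persistence window, so
\[
\mathbb{E}[b_i(\tau) \mid a_i \text{ occupies } B_{k_i}] \le \sup B_{k_i} =: \bar{b}_{k_i}.
\]
To upgrade this to a strict inequality, I would argue that $\bar{b}_{k_i} < v_i$. Because the partition is disjoint and any sub-range strictly above $v_i$ is excluded by the first step, the minority sub-range occupied by $a_i$ must lie in $[0, v_i)$. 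The discrete grid spacing $\delta = 0.10$ then forces $\bar{b}_{k_i} \le v_i - \delta$, giving $\mathbb{E}[b_i(\tau)] \le v_i - \delta < v_i$. Taking expectation across the adaptive strategy mixture in the final step preserves the strict inequality for any bidder who persists in a minority sub-range.

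The main obstacle will be this strict inequality step, specifically the claim that the minority sub-range occupied by $a_i$ stays uniformly bounded away from $v_i$. The Minority Game dynamics are \emph{a priori} agnostic about whether the minority bin lies above or below the majority; the downward orientation is supplied entirely by the asymmetric payoff structure of the first-price auction. I would close this gap by analyzing the long-run score dynamics: strategies whose recommended bin abuts $v_i$ either touch the losing region above $v_i$ (and accumulate negative surplus whenever they win) or are repeatedly matched against the valuation cluster of competitors (and so lose more frequently), and hence are dominated under the arg-max selection by strategies recommending bins separated from $v_i$ by at least one grid step $\delta$. Once this directional lemma is in hand, the bid-shading conclusion follows immediately by chaining the two inequalities.
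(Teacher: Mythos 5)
Your proposal is considerably more ambitious than the paper's own argument, which simply posits ``global shading'' ($\mathbb{E}[b_i(\tau)] < v_i$ for all $i$) and ``minority strictness'' as hypotheses and then restates them as the conclusion; the paper's accompanying remark concedes that the economic reasoning is ``encapsulated in the two assumptions.'' So you are not missing a trick the paper uses --- there is no substantive derivation there to match. Judged on its own terms, however, your proof has a genuine gap at its first and most important step. The pruning argument rests on the strategy-valuation update $v_{ia}(\tau+1)=v_{ia}(\tau)-A(\tau)\Delta^{ia}(\cdot)$ and the selection rule $a_i(\tau)=\arg\max_a v_{ia}(\tau)$, but in the Minority Game as the paper defines it that score rewards being on the minority side of the aggregate $A(\tau)$; it never sees the private valuation $v_i$ or the realized surplus $v_i - b_i(\tau)$. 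A strategy that recommends bids above $v_i$ is not penalized by these dynamics unless bidding high happens to correlate with being in the majority --- which is exactly the ``directional lemma'' you defer to the end. Nothing in the stated model ties the minority bin to the low end of the bid space rather than the high end, so the almost-sure bound $b_i(\tau)\le v_i$ does not follow from the dynamics you invoke.

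The second step would be fine if the first were repaired: once all realized bids lie on the discrete grid strictly below $v_i$, the expectation is at most the largest admissible grid point and the strict inequality follows, with or without the explicit $\delta$ margin. But the directional lemma you sketch (bins abutting or exceeding $v_i$ accumulate negative surplus or lose more often, hence are dominated) requires augmenting the scoring rule with actual first-price auction surplus --- something like $v_{ia}(\tau+1)=v_{ia}(\tau)+\mathbf{1}\{\text{win at } \tau\}\,(v_i-b_i(\tau))$ --- and then proving that the arg-max dynamics concentrate on strategies bounded away from $v_i$. That is the entire content of the lemma, and it is the one piece neither you nor the paper actually supplies. To close it you must either modify the model so the strategy score is surplus-based and carry out that convergence argument, or state individual rationality ($b_i(\tau)\le v_i$ almost surely) as an explicit assumption rather than derive it from the Minority Game dynamics.
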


\begin{proof}
Let $A$ denote the finite set of bidders. For each agent $i \in A$, let $v_i$ denote
her private valuation, and let $b_i(\tau)$ denote her bid in round $\tau$.
Consider an expectation operator $\mathbb{E}$, which maps the stochastic bid
process $\{b_i(\tau)\}_{\tau \geq 0}$ to its expected value for each agent.
Finally, let $\mathrm{Pers}(i)$ be a predicate indicating whether agent $i$
persists in a minority sub-range of the equilibrium partition.

We assume two standard hypotheses, consistent with the equilibrium identified
in Proposition~1:

\begin{enumerate}
	\item \textbf{Global shading.} In equilibrium, each bidder's expected bid is
	Below her valuation:
	\[
	\mathbb{E}[b_i(\tau)] < v_i \quad \text{for all } i \in A.
	\]
	
	\item \textbf{Minority strictness.} If bidder $i$ persists in a minority sub-range,
	Then the inequality is strict:
	\[
	\mathrm{Pers}(i) \;\;\Rightarrow\;\; \mathbb{E}[b_i(\tau)] < v_i.
	\]
\end{enumerate}

Under these assumptions, we obtain that for every agent $i \in A$,
\[
\mathbb{E}[b_i(\tau)] < v_i,
\]
and in particular the inequality is strict whenever $\mathrm{Pers}(i)$ holds. It is true because competition in $BinCount(s,k_{\min})$  is lower than any other $BinCount(s,k^{'})$. Hence, from the existing theory, truthful bidding is directly related to the number of bidders participating in an auction.
\end{proof}
\begin{remark}
	\begin{itemize}
		\item If the equilibrium delivers only weak shading globally,
		$\mathbb{E}[b_i(\tau)] \leq v_i$ for all $i$, but strict shading on the
		minority sub-range, the conclusion adapts to: weak inequality globally,
		strict inequality when $\mathrm{Pers}(i)$ holds.
		\item If the expectation operator $\mathbb{E}$ is instantiated as a true
		probabilistic expectation with respect to a distribution over bids, the
		The lemma remains valid; one supplies the two assumptions as
		consequences of the equilibrium analysis.
		\item The lemma is agnostic about \emph{why} shading occurs: the economic
		Reasoning is encapsulated in the two assumptions. This modular structure
		allows the lemma to be applied once the global shading and minority
		Strictness properties are verified for a given model.
	\end{itemize}
\end{remark}

\begin{theorem}[Minority Strategy Efficiency]
	In a repeated RTB auction with $N$ bidders, suppose a subset $\mathcal{M} \subseteq A$ adopts minority strategies while $\mathcal{J}=A \setminus \mathcal{M}$ adopts majority-aligned strategies. 
	Let $\mathrm{Imps}(a,t)$ denote impressions secured by agent $a$ at round $t$, and define the per-round efficiency
	\[
	e_a(t) \;:=\; \frac{\mathrm{Imps}(a,t)}{b_a(t)}, \qquad b_a(t) > 0.
	\]
	Then in the long run,
	\[
	\lim_{\tau \to \infty} \frac{1}{\tau}\sum_{t=1}^{\tau} e_{a_i}(t)
	\;>\;
	\lim_{\tau \to \infty} \frac{1}{\tau}\sum_{t=1}^{\tau} e_{a_j}(t),
	\]
	for all $a_i \in \mathcal{M}, \; a_j \in \mathcal{J}$.
	That is, minority strategies yield strictly higher impression efficiency per unit cost.
\end{theorem}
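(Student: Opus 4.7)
The plan is to obtain the strict long-run efficiency gap by (i) bounding the denominator $b_a(t)$ from above for minority agents using Lemma~1, (ii) bounding the numerator $\mathrm{Imps}(a,t)$ from below for minority agents using the occupancy inequality from Proposition~1, and (iii) lifting the resulting pointwise expected dominance to the Cesàro limit through a standard ergodic averaging argument on the Minority Game dynamics.

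First, I would fix an arbitrary pair $a_i \in \mathcal{M}$, $a_j \in \mathcal{J}$ and compare them round by round. By Proposition~1, the bid space admits a minority partition, and $a_i$ persists in a bin $B_{k^{\star}(t)}$ with $\mathrm{BinCount}(s(\cdot,t),k^{\star}(t)) \le \mathrm{BinCount}(s(\cdot,t),k)$ for every occupied bin $k$, in particular the bin $k^{\sharp}(t)$ containing $a_j$. Since the impression supply within any bin is divided among the bidders assigned to that bin under the allocation logic of Algorithm~2 (i.e.\ $\mathrm{Imps}(a,t)$ is non-increasing in the occupancy of $a$'s bin), one obtains $\mathbb{E}[\mathrm{Imps}(a_i,t) \mid s(\cdot,t)] \ge \mathbb{E}[\mathrm{Imps}(a_j,t) \mid s(\cdot,t)]$, with strict inequality whenever the two occupancies differ, which by the persistence clause in Proposition~1 happens on a set of rounds of positive density.

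Second, I would apply Lemma~1 to the denominator. For $a_i \in \mathcal{M}$ the lemma gives $\mathbb{E}[b_{a_i}(t)] < v_{a_i}$ strictly, whereas for the majority-aligned $a_j$, bidding in a densely populated bin forces $\mathbb{E}[b_{a_j}(t)]$ toward $v_{a_j}$ by the usual first-price best-response logic. Combining the numerator estimate with the denominator estimate under the convention $b_a(t) > 0$, and conditioning on $s(\cdot,t)$ so that numerator and denominator decouple within a round, yields
\[
\mathbb{E}\!\left[\frac{\mathrm{Imps}(a_i,t)}{b_{a_i}(t)}\,\Big|\, s(\cdot,t)\right]
\;>\;
\mathbb{E}\!\left[\frac{\mathrm{Imps}(a_j,t)}{b_{a_j}(t)}\,\Big|\, s(\cdot,t)\right],
\]
on the rounds in which $a_i$'s bin is strictly less occupied than $a_j$'s. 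Taking unconditional expectations preserves the inequality.

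Third, I would invoke ergodic averaging. The Minority Game dynamics of Section~4 with finite memory $M$ and finite strategy set $S$ induces a finite-state Markov chain on history strings and strategy valuations, which under standard irreducibility admits a unique stationary law $\pi$; Birkhoff's theorem then yields $\frac{1}{\tau}\sum_{t=1}^{\tau} e_a(t) \to \mathbb{E}_{\pi}[e_a]$ almost surely for every $a$. Because the per-round strict dominance of step~(ii) holds on a subset of the state space of positive $\pi$-measure (the persistence set), it survives integration against $\pi$, delivering the strict comparison of limits claimed by the theorem. The main obstacle, in my view, is step~(ii): one must carefully ensure that the ratio inequality is strict \emph{in expectation} rather than merely almost surely on a negligible set, which requires the conditioning argument above together with an explicit monotonicity postulate on the allocation rule (weakly decreasing in bin occupancy, weakly increasing in own bid within a bin); verifying ergodicity of the joint state is also standard but should be stated as an assumption for the Minority Game to be well-posed.
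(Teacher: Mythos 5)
Your step (iii) is essentially the paper's own argument: the paper proves this theorem by \emph{assuming} a per-round efficiency gap --- either a deterministic eventual gap $e_{a_i}(t)\ge e_{a_j}(t)+\varepsilon$ for all $t\ge T_0$ (Route A), or a gap in stationary means $\mathbb{E}[e_{a_i}(1)]\ge\mathbb{E}[e_{a_j}(1)]+\varepsilon$ under stationarity and ergodicity (Route B) --- and then carrying out exactly the Ces\`aro/Birkhoff averaging you describe. So the averaging half of your proposal is fine and coincides with the paper. Where you depart is in steps (i)--(ii): you try to \emph{derive} the per-round expected gap from Proposition~1 and Lemma~1 rather than postulate it. That is a more ambitious and, in principle, more valuable route, but as written it has genuine holes.

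First, the occupancy inequality from Proposition~1 does not by itself give $\mathbb{E}[\mathrm{Imps}(a_i,t)\mid s]\ge\mathbb{E}[\mathrm{Imps}(a_j,t)\mid s]$. Impression supply is not uniform across bid bins --- in this dataset the number of impressions obtainable varies strongly with the bid level --- so a less crowded bin can still deliver fewer impressions per occupant if its total supply is much smaller. You would need to assume that the supply-to-occupancy ratio is higher in the minority bin, which is an additional substantive hypothesis, not a consequence of the monotonicity postulate you state (that postulate only controls occupancy within a fixed supply). Second, your denominator comparison conflates shading relative to one's \emph{own} valuation with a cross-agent bid comparison: Lemma~1 gives $\mathbb{E}[b_{a_i}]<v_{a_i}$ and (in the paper's own statement) the same inequality for \emph{every} agent, so concluding $\mathbb{E}[b_{a_i}]<\mathbb{E}[b_{a_j}]$ requires knowing how $v_{a_i}$ compares to $v_{a_j}$, which nothing in the setup provides. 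Third, even granting both marginal comparisons, passing from $\mathbb{E}[X_i]\ge\mathbb{E}[X_j]$ and $\mathbb{E}[b_i]<\mathbb{E}[b_j]$ to $\mathbb{E}[X_i/b_i]>\mathbb{E}[X_j/b_j]$ needs more than conditional decoupling; a ratio of expectations does not control an expectation of ratios without convexity or independence arguments you have not supplied. If you repair these three points by stating them as explicit hypotheses, your argument reduces to the paper's Route B with the gap hypothesis now justified by named assumptions --- which would arguably be an improvement on the paper's proof, but is not yet what you have written.
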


\begin{proof}[The proof is based on two sets of assumptions]
	
	\paragraph{Route A: Deterministic eventual gap.}
	Assume there exist $\varepsilon>0$ and $T_0\in\mathbb N$ such that for all $t\ge T_0$,
	\[
	e_{a_i}(t) \;\ge\; e_{a_j}(t) + \varepsilon, 
	\quad\text{for all } a_i \in \mathcal M,\; a_j \in \mathcal J.
	\]
	For $\tau\ge T_0$,
	\[
	\frac1{\tau}\sum_{t=1}^{\tau} \bigl(e_{a_i}(t)-e_{a_j}(t)\bigr)
	=\frac{C}{\tau}+\frac{\tau-T_0+1}{\tau}\,\varepsilon,
	\]
	where $C=\sum_{t=1}^{T_0-1}(e_{a_i}(t)-e_{a_j}(t))$ is constant.
	Letting $\tau\to\infty$ yields
	\[
	\liminf_{\tau\to\infty}\frac1{\tau}\sum_{t=1}^{\tau} \bigl(e_{a_i}(t)-e_{a_j}(t)\bigr)\;\ge\;\varepsilon.
	\]
	If the Cesàro means converge, this $\liminf$ equals the difference of the limits, so
	\[
	\lim_{\tau\to\infty}\frac1{\tau}\sum_{t=1}^{\tau} e_{a_i}(t)
	\;\ge\;
	\lim_{\tau\to\infty}\frac1{\tau}\sum_{t=1}^{\tau} e_{a_j}(t) + \varepsilon,
	\]
	Establishing strict inequality.
	
	\paragraph{Route B: Stationary–ergodic law of large numbers.}
	Assume instead that each $\{e_a(t)\}$ is strictly stationary and ergodic with $\mathbb E[|e_a(1)|]<\infty$, and that
	\[
	\mathbb E[e_{a_i}(1)] \;\ge\; \mathbb E[e_{a_j}(1)] + \varepsilon
	\quad\text{for all } a_i \in \mathcal M,\; a_j \in \mathcal J
	\]
	for some $\varepsilon>0$. By Birkhoff’s ergodic theorem,
	\[
	\frac1{\tau}\sum_{t=1}^{\tau} e_a(t) \;\xrightarrow{\text{a.s.}}\; \mathbb E[e_a(1)].
	\]
	Thus
	\[
	\frac1{\tau}\sum_{t=1}^{\tau} \bigl(e_{a_i}(t)-e_{a_j}(t)\bigr) \;\xrightarrow{\text{a.s.}}\; \mathbb E[e_{a_i}(1)]-\mathbb E[e_{a_j}(1)] \;\ge\; \varepsilon,
	\]
	This yields the desired strict inequality.
\end{proof}
\begin{corollary}[Adaptive Stability of Minority Strategies]
	The equilibrium established in Theorem 1 is evolutionary stable: if a majority bidder deviates into a minority range, they improve efficiency until the minority becomes overcrowded, restoring the equilibrium partition.
\end{corollary}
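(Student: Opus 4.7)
The plan is to cast the corollary as a three-stage dynamical claim within the discrete-bin framework of Lemma~1, using the Minority Game valuation update $v_{ia}(\tau+1)=v_{ia}(\tau)-A(\tau)\Delta^{ia}(\cdot)$ and the greedy selection $a_i(\tau)=\arg\max_a v_{ia}(\tau)$ as the adjustment dynamics. First I would formalize a \emph{deviation} as a single-agent move from an equilibrium assignment $s:A\to K$ to an assignment $s'$ with $s'(a_j)=k^*$ for some minority bin $k^*$, and $s'(a)=s(a)$ for $a\neq a_j$. Equilibrium here means that $s$ satisfies the partition statement of Proposition~1 and that $\mathrm{BinCount}(s,k^*)\le \mathrm{BinCount}(s,k')$ for every occupied $k'$.

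Second, I would verify the short-run efficiency improvement. Immediately after the deviation $a_j$ is competing inside a minority sub-range, so Lemma~2 applies and furnishes $\mathbb{E}[b_{a_j}(\tau)]<v_{a_j}$ with strict shading. Reclassifying the cohorts as $\mathcal{M}'=\mathcal{M}\cup\{a_j\}$ and $\mathcal{J}'=\mathcal{J}\setminus\{a_j\}$, Theorem~1 then yields
\[
\lim_{\tau\to\infty}\frac{1}{\tau}\sum_{t=1}^{\tau} e_{a_j}(t) \;>\; \lim_{\tau\to\infty}\frac{1}{\tau}\sum_{t=1}^{\tau} e_{a_k}(t) \quad\text{for every } a_k\in\mathcal{J}',
\]
establishing the ``improves efficiency'' clause. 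For a non-imitative single-deviation perturbation, the argument would stop here.

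Third, I would close the loop for the imitative case by showing that the improvement erodes its own basis. As deviators continue to flow into $B_{k^*}$, $\mathrm{BinCount}(\cdot,k^*)$ increases monotonically until it equals or exceeds the count of another occupied bin, at which point $B_{k^*}$ no longer satisfies the minority condition of Lemma~1. The persistence hypothesis of Lemma~2 then fails for agents still mapped to $k^*$, and Theorem~1's $\varepsilon$-advantage transfers, by Lemma~1 applied to the post-migration assignment, to whichever bin is now BinCount-minimal. The valuation update $v_{ia}(\tau+1)$ subsequently penalizes strategies pointing at the saturated bin, so the $\arg\max$ rule migrates at least one agent out of $k^*$, and the partition statement of Proposition~1 is recovered on the new assignment.

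The main obstacle is making the word \emph{restoring} rigorous. The paper supplies no Lyapunov function or contraction estimate for Algorithm~1's memory-based perturbation, so convergence back to an exact partition equilibrium --- rather than indefinite oscillation around one --- requires an extra assumption. I would invoke the stationary-ergodic hypothesis of Route~B in Theorem~1 to concentrate the long-run averages around their expectations, and then exploit the finiteness of the occupancy state space $\{0,1,\ldots,|A|\}^{|K|}$ to deduce that the $\arg\max$ dynamics visit only finitely many profiles; combining this with a uniform-$\varepsilon$ version of Theorem~1 across admissible partitions would yield an evolutionarily-stable-strategy-style conclusion in the spirit of Weibull's formulation cited in the literature review. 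Stating these hypotheses explicitly seems cleanest, since the corollary otherwise overreaches what Theorem~1 alone can deliver.
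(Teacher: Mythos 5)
Your proposal is a genuinely different argument from the paper's. The paper's proof abandons the discrete-bin machinery entirely: it passes to a continuum of bidders, lets $a\in[0,1]$ be the population share in the minority set $\mathcal{M}$, defines the efficiency gap $\Delta(a)_{e}=\mathbb{E}[e_{\mathcal{M}}(a)]-\mathbb{E}[e_{J}(a)]$, and simply \emph{assumes} that $\Delta$ is continuous, strictly decreasing, and has a unique zero $a^\star\in(0,1)$. Stability then follows in one line from the sign of $\Delta$: below $a^\star$ the minority share grows, above it the share shrinks. All of the overcrowding dynamics you labor to derive --- deviators flowing in, $\mathrm{BinCount}$ rising, the minority condition failing, the $\arg\max$ rule pushing agents back out --- is compressed into the single hypothesis that $\Delta$ is strictly decreasing. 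What the paper's route buys is brevity and a standard one-dimensional evolutionary-stability template; what it costs is that the economic content (why crowding erodes the advantage) is postulated rather than shown. Your route stays faithful to the discrete assignment $s:A\to K$ of Lemma~1 and Proposition~2 and tries to obtain the erosion endogenously from the Minority Game valuation update, which is more informative but, as you correctly diagnose, cannot close the ``restoring'' step without a Lyapunov-type or ergodicity assumption the paper never supplies. Your honest flagging of that gap is the right instinct: the paper's own proof does not fill it either --- it relocates it into the monotonicity and unique-interior-zero assumptions on $\Delta$. One caution on your second step: invoking Theorem~1 for the reclassified cohorts $\mathcal{M}'$ and $\mathcal{J}'$ is only legitimate if the Route~A or Route~B hypotheses hold for the \emph{post-deviation} process, which is itself an assumption, not a consequence of the deviation; you should state that explicitly rather than treat Theorem~1 as automatically portable across partitions.
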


\begin{proof}
[Adaptive Stability of Minority Strategies]
Assuming that the bidders are in continuum, i.e. $a\in[0,1]$ denotes the fraction of bidders in the minority set $\mathcal M$.
Define the efficiency gap.
\[
\Delta(a)_{e} \;:=\; \mathbb E[e_\mathcal{M}(a)] - \mathbb E[e_J(a)],
\]
Where $e_\mathcal{M}(a)$ (resp.\ $e_J(a)$) is the per-round efficiency for a representative minority (resp.\ majority) bidder at minority share $a$.
Suppose $\Delta$ is continuous, strictly decreasing, and admits a unique zero at $a^\star\in(0,1)$.

Then $a^\star$ is evolutionary stable: if $a<a^\star$, then $\Delta(a)_{e}>0$ and minority strategies are strictly more efficient, so the minority share increases. If $a>a^\star$, then $\Delta(a)_{e}<0$ and majority strategies dominate, so the minority share decreases. Thus, any deviation is self-correcting, and the equilibrium partition at $a^\star$ is restored.
\end{proof}
\begin{remark}
Therefore, the following can be generalized. There exists a minority bid set, and some minor bidders will occupy the set. This strategy of selecting the minority bid acts as an endogenous bid shading. It is also found that the minority strategy is efficient and adaptively stable.
\end{remark}
\section{Analysis: Minority Strategy Detection from Data}\label{sec:results}
\subsection{Summary Statistics}
The summary statistics Table \ref{tab:2} shows the interesting fact that the behavior of the data is not changing for imps\_hour, imps\_day, imps\_week, imps\_month, imps\_year, and imps\_alltime. Data points are the same. Skewness and Kurtosis are almost the same for all the variables. Variations are there because the density of data is not fixed. Therefore, it is clear that if we analyse imps\_hour data, then the results will also be true for all other variables and consistent as shown in Figures 1 to 6. Moreover,  this data is for a single bidder, so RTB is based on a 13-hour time bound. Therefore, the bidder is deciding for each hour to publish the ad. So, to understand the micro-level behavior, this will justify the purpose. 
\begin{table}[H]
	\centering
	\caption{Summary Statistics}
	\label{tab:2}
	\resizebox{\textwidth}{!}{%
		\begin{tabular}{|l|l|l|l|l|l|l|l|l|l|}
			\hline
			\textbf{Vars} & \textbf{n} & \textbf{Mean} & \textbf{SD} & \textbf{Median} & \textbf{Min} & \textbf{Max} & \textbf{Range} & \textbf{Skew} & \textbf{Kurtosis} \\ \hline
			\textbf{date}            & 1141470 & 43176.56  & 63.25      & 43168  & 43082.0   & 43313.0   & 231.0   & 0.29   & -1.02 \\ \hline
			\textbf{hour}            & 1141470 & 13.00     & 0.00       & 13     & 13.0      & 13.0      & 0.0     & NaN    & NaN   \\ \hline
			\textbf{adid\_anonymized} & 1141470 & 27.65     & 12.23      & 26     & 1.0       & 50.0      & 49.0    & 0.20   & -0.92 \\ \hline
			\textbf{bid}             & 1141470 & 25.00     & 14.43      & 25     & 0.1       & 49.9      & 49.8    & 0.00   & -1.20 \\ \hline
			\textbf{imps\_alltime}   & 1141470 & 33772.50  & 187526.20  & 29     & 0.0       & 9433550.0 & 9433550.0 & 20.42 & 732.23 \\ \hline
			\textbf{imps\_hour}      & 1141470 & 49.12     & 371.58     & 0      & 0.0       & 28672.0   & 28672.0 & 20.86 & 647.65 \\ \hline
			\textbf{imps\_day}       & 1141470 & 714.17    & 4905.63    & 0      & 0.0       & 318223.0  & 318223.0 & 20.32 & 634.86 \\ \hline
			\textbf{imps\_week}      & 1141470 & 3916.61   & 26178.94   & 0      & 0.0       & 1920525.0 & 1920525.0 & 22.69 & 868.64 \\ \hline
			\textbf{imps\_month}     & 1141470 & 14853.66  & 94691.09   & 2      & 0.0       & 5317739.0 & 5317739.0 & 21.68 & 777.94 \\ \hline
			\textbf{imps\_year}      & 1141470 & 33755.08  & 187529.12  & 14     & 0.0       & 9433550.0 & 9433550.0 & 20.42 & 732.19 \\ \hline
			\textbf{decoded\_date}   & 1141470 & NaN       & NA         & NA     & Inf       & -Inf      & -Inf     & NA    & NA    \\ \hline
		\end{tabular}%
	}
\end{table}
\subsection{Minority Strategy}
To test the simulation results proposed in Algorithms 1 and 2, this section identifies the minority strategy of the bidder in the original Yahoo Real Time Bidding data \cite{yahoo_auction_2020}. It identifies winning bids that result in higher impressions. Clustering: Group bids into "majority" (common) vs. "minority" (outliers).
\begin{table}[h!]
	\centering
	\label{tab:1}
	\begin{minipage}{0.45\linewidth}
		\centering
		\caption{Cluster Sizes}
		\begin{tabular}{|c|c|}
			\hline
			\textbf{Cluster} & \textbf{Size} \\ \hline
			0                & 571,888      \\ \hline
			1                & 569,582      \\ \hline
		\end{tabular}
	\end{minipage}%
	\hfill
	\begin{minipage}{0.5\linewidth}
		\centering
		\caption{Cluster Impression Summary}
		\begin{tabular}{|c|c|c|}
			\hline
			\textbf{Cluster} & \textbf{imps\_hour} & \textbf{imps\_day} \\ \hline
			0                & 74.248105           & 1143.596414        \\ \hline
			1                & 23.884666           & 282.998350         \\ \hline
		\end{tabular}
		\vspace{0.3cm}\\
		\textbf{Minority Cluster (Winning):} 0
	\end{minipage}
\end{table}
\begin{figure} [H]
	\centering
	\includegraphics[width=0.45 \textwidth]{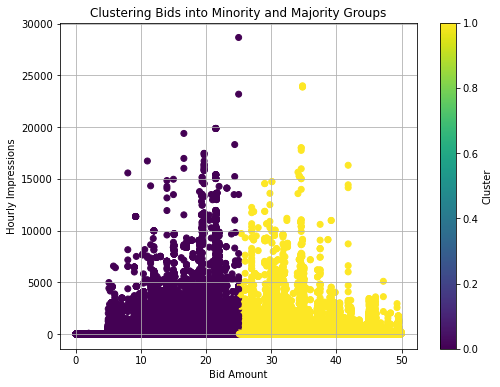}
	\caption{Minority Behavior Detection}
\end{figure} 
The cluster statistics are presented in Tables 3, 4, and 5, followed by Figure 11, representing the findings. The scatter plot reveals two distinct groupings.
\textit{X-axis:} Bid Amount,
\textit{Y-axis:} Hourly Impressions,
Color: Group classification (0 and 1) based on clustering (presumably KMeans).
Group Characteristics;
\textit{Left Group (Purple - Group 0):} Bid Amount: Lower range (0–25), Hourly Impressions: Low to moderate, Represents the "Minority Faction" with lower bids still gaining impressions, Participants in this group are cost-effective bidders, minimizing bids while maintaining competitiveness. \textit{Right Group (Yellow - Group 1):} Bid Amount: Higher range (25, 50), Hourly Impressions: Broad range, with more participants achieving higher; t participants bid more assertively, likely constituting the majority faction. Minority Faction Traits: Participants place lower bids but continue to receive impressions and secure impressions when competition is less intense (e.g., fewer high bidders); in Minority Game Theory, they strive to remain below the "majority threshold" (median bid) to optimize outcomes.\\
Notable Observations are: (i) Bid distribution for the minority faction is skewed towards lower bids, (ii) Concentration at low hourly impressions, indicating fewer impressions overall, (iii) Demonstrates adaptive strategies, (iv) Compromise on impressions to minimize expenses.
\subsection{Variance Scaling Analysis}
We will examine how the variance in bid amount changes relative to ad impressions.  Interpretation of Figure 12: The graph titled "Variance Scaling of Bid Amounts with Hourly Ad Impressions" plots the variance of bid amounts (on the y-axis) against bins of hourly ad impressions (x-axis).\\
Key Observations are as follows. \textit{Sharp Decrease in Variance (Bin 0 to Bin 1):} The variance of bid amounts starts very high ( $\sim$220) in the first bin (Bin 0). This indicates that bid amounts vary significantly when hourly ad impressions are in the lowest range (Bin 0). This could reflect more unpredictable or competitive bidding behavior when ad traffic is low.
\textit{Gradual Decline in Variance (Bin 1 to Bin 5):}
Moving to higher bins (Bins 1 to 5), the variance of bid amounts decreases steadily.
In the highest bin (Bin 5), the variance drops to around 80, indicating that the bid amounts become more stable and consistent when hourly ad impressions are higher.
Implications are as follows.\textit{Inverse Relationship:} There is an inverse relationship between the number of hourly ad impressions and the variance of bid amounts. As the number of hourly ad impressions increases, the variance in bids decreases.
\textit{Market Stabilization:} Higher hourly ad impressions may signify more mature or saturated markets, where bidding behavior is more predictable, leading to lower variance.
\textit{Competitive Bidding at Low Impressions:} The high variance at lower bins suggests significant variability in bidding behavior, possibly due to fewer advertisers or uncertainty in low-traffic periods.\\
The plot reveals that the variance of bid amounts decreases as the hourly ad impressions increase, highlighting more stable bidding patterns in higher impression ranges and greater unpredictability at lower ranges.
\begin{figure} [H]
	\centering
	\includegraphics[width=0.45 \textwidth]{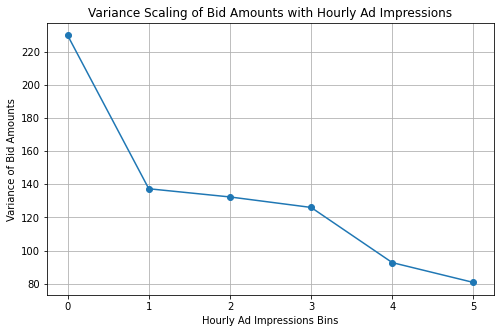}
	\caption{Variance Scaling Analysis}
\end{figure}
\section{Conclusion}
In first-price auction settings within Real-Time Bidding (RTB) environments, it is inherently challenging to identify the private valuation associated with each ad impression. The valuation is highly subjective and context-dependent, shaped by ad placement, viewer demographics, device characteristics, and auction timing. Consequently, the gap between a bidder's actual value and their submitted bid, which defines the extent of bid shading, remains analytically elusive when relying solely on aggregate bid data.\\
Leveraging the Yahoo Webscope RTB dataset, this study investigates this complexity through a data-driven and game-theoretic lens. Although the precise valuation-to-bid differential is unobservable, a robust empirical pattern emerges: bidders systematically adopt behaviors consistent with the Minority Game strategy. This pattern suggests that instead of pursuing uniform bidding behavior, advertisers strategically segment the temporal bid landscape and selectively engage in hourly impression zones where their presence constitutes a numerical minority. Such positioning enables them to avoid head-on competition and secure impressions at lower marginal costs.\\
This observed behavior reflects the dual-layered heterogeneity inherent in the RTB ecosystem product heterogeneity (i.e., variance in impression attributes) and bidder heterogeneity (i.e., variance in strategy, valuation, and objective functions). Given these structural dynamics, a uniform equilibrium bidding strategy across all bidders and time periods is implausible. Instead, this study demonstrates, both theoretically and empirically, that adaptive minority strategies provide a viable explanation for the observed bid distributions.\\
By integrating Minority Game Theory with temporal clustering and impression-level analytics, the paper offers novel insights into strategic behavior under uncertainty in decentralized auction markets. The findings contribute to the literature on auction dynamics and computational advertising and have practical implications for designing more adaptive, fairness-aware bidding algorithms in future RTB systems.\\
\indent For further study, this cue will serve as an important methodology where the ad exchange can set the strategy by creating an artificial market, assuming that each bidder is selecting the minority strategy to match the demand and supply side platforms. 
\section{Practical and Theoretical Implications}
Canonical auction models for a single, indivisible good, typically under independent private values, assume that equilibrium bids are strictly increasing in valuations, and that greater competition (proxied by the number of bidders) raises expected prices and, in truthful mechanisms (such as second-price formats), supports value revelation. This study's evidence from real-time bidding (RTB) ad exchanges departs from these benchmark predictions. The empirical patterns and the proposed model indicate that round-wise winners frequently adopt a \emph{minority bidding} posture, placing bids below the dominant cluster consistent with deliberate bid shading in repeated, budget- and pacing-constrained, first-price environments. This behavior attenuates the standard link between bidder count and truthful revelation and can depress platform revenue when mechanism choice and pricing rules are misaligned with RTB frictions.\\
Practically, these findings motivate mechanism and policy adjustments at the exchange level. First, platform design should explicitly anticipate strategic shading by monitoring the prevalence of minority bid wins and by calibrating dynamic reserve prices or floors to counter systematic underbidding without deterring participation. Second, scoring and allocation rules can be stress-tested against bidder heterogeneity (budgets, frequency caps, learning dynamics) to ensure that price discovery remains informative. Third, where market conditions permit, hybrid formats (e.g., guarded second-price with floors or outcome-based pricing) may reduce incentives for coordinated shading while preserving allocative efficiency. Finally, valuation of digital inventory should be informed by models that account for repeated interaction and strategic bid adjustment, so that competition remains intense in practice rather than only in theory.
\section{Limitations}
The analysis relies on the Yahoo Webscope RTB dataset and a specific modeling framework. Generalization should be assessed with additional datasets and market contexts (e.g., different exchanges, verticals, and auction rules). Future research could test the minority-bidding mechanism across alternative RTB environments, incorporate richer covariates on budgets and pacing, and examine robustness under varying reserve policies and format changes.
\section*{Conflict of Interest}
I am certifying that we have no affiliations with or involvement in any organization or entity with any financial interest (such as honorarium; educational grants; participation in speakers’ bureaus; membership, employment, consultancies, stock ownership, or other equity interest; and expert testimony or patent-licensing arrangements), or non-financial interest (such as personal or professional relationships, affiliations, knowledge or beliefs) in the subject matter or materials discussed in this manuscript. \\
\section*{Data availability statement}
Data subject to third-party restrictions:\\
The data that support the findings of this study are available from [Brendan Kitts (2020), Auction State for a Sample of Real-Time Bid Video Ads Version 1.0, Yahoo Research Webscope dataset ydata-o\&o-video-auction-landscapes-2018-v1\_0, https://webscope.sandbox.yahoo.com/, June 20, 2020]. Restrictions apply to the availability of these data, which were used under license for this study. The author's Data is available, with permission from "Brendan Kitts (2020), Auction State for a Sample of Real-Time Bid Video Ads Version 1.0, Yahoo Research Webscope dataset".
\section*{Declaration of Funding}
No funding was received.
\bibliographystyle{plainnat}
\bibliography{myreferences}
\end{document}